\newtheorem{Theorem}{Theorem}
\newtheorem{Corollary}[Theorem]{Corollary}
\newcommand{\phylogeny}{{\cal T}}
\newcommand{\order}[1]{{\cal O}\hspace{-0.2em}\left( #1 \right)}
\newcommand{\bbR}{{\mathbb R}}
\newcommand{\bbN}{{\mathbb N}}
\newcommand{\bbC}{{\mathbb C}}
\newcommand{\bbP}{{\mathbb P}}
\newcommand{\bbE}{{\mathbb E}}
\newcommand{\Var}{\textnormal{Var\hspace{0.5mm}}}
\newcommand{\X}{\textbf{X}}
\newcommand{\Y}{\textbf{Y}}
\newcommand{\E}{{\mathbf{ E}}}
\newcommand{\diag}{\textnormal{diag}}
\newcommand{\op}{\textnormal{op}}
\newcommand{\as}{\textnormal{a.s.}}
\newcommand{\GD}[1]{#1}
\newcommand{\y}{\mathbf{y}}
\newcommand{\QQ}{\mathbf{Q}}
\newcommand{\LLambda}{\boldsymbol{\Lambda}}
\newcommand{\M}{\mathbf{M}}
\newcommand{\JJ}{\mathbf{J}}
\newcommand{\II}{\mathbf{I}}
\newcommand{\AAA}{\mathbf{A}}
\newcommand{\Zero}{\boldsymbol{0}}
\newcommand{\ttheta}{\boldsymbol{\theta}}
\newcommand{\mom}{\boldsymbol{\xi}}
\newcommand{\ppi}{\boldsymbol{\pi}}
\renewcommand{\P}{\mathbf{P}}
\newcommand{\p}{\mathbf{p}}
\newcommand{\q}{\mathbf{q}}
\newcommand{\B}{\mathbf{B}}
\newcommand{\Z}{\mathbf{E}}
\newcommand{\TT}{\mathbf{T}}
\newcommand{\RM}{\mathbf{R}}
\newcommand{\UU}{\mathbf{U}}
\newcommand{\RR}{\mathbb{R}}
\newcommand{\one}{\mathbf{1}}
 \newtheorem{@definition}{\bf Definition}
 \newenvironment{definition}{\begin{@definition}\rm}{\end{@definition}}
 \newtheorem{@remark}{\bf Remark}
 \newenvironment{remark}{\begin{@remark}\rm}{\end{@remark}}
 \newtheorem{@example}{\bf Example}
 \newtheorem{theorem}{\bf Theorem}
 \newtheorem{lemma}{\bf Lemma}
\title{On the surprising effectiveness of a simple matrix exponential derivative approximation, with application to global SARS-CoV-2}
\author[a]{Gustavo Didier}
\author[a]{Nathan E.~Glatt-Holtz} 
\author[b,1]{Andrew J.~Holbrook}
\author[b]{Andrew F.~Magee}
\author[b,c,d]{Marc A.~Suchard}
\affil[a]{Department of Mathematics, Tulane University}
\affil[b]{Department of Biostatistics, University of California, Los Angeles}
 \affil[c]{Department of Biomathematics, University of California, Los Angeles}
 \affil[d]{Department of Human Genetics, University of California, Los Angeles}
\keywords{Continuous-time Markov chains $|$  Hamiltonian Monte Carlo $|$ Matrix exponential $|$ Molecular epidemiology $|$ Random matrix theory } 
\begin{abstract}
  The continuous-time Markov chain (CTMC) is the mathematical
  workhorse of evolutionary biology.  Learning CTMC model parameters
  using modern, gradient-based methods requires the derivative of the
  matrix exponential evaluated at the CTMC's infinitesimal generator
  (rate) matrix.  Motivated by the derivative's extreme computational
  complexity as a function of state space cardinality, recent work
  demonstrates the surprising effectiveness of a naive, first-order
  approximation for a host of problems in computational biology.  In
  response to this empirical success, we obtain rigorous deterministic and
  probabilistic bounds for the error accrued by the naive
  approximation and establish a ``blessing of dimensionality'' result
  that is universal for a large class of rate matrices with random
  entries.  Finally, we apply the first-order approximation within
  surrogate-trajectory Hamiltonian Monte Carlo for the analysis of the
  early spread of SARS-CoV-2 across 44 geographic regions that
  comprise a state space of unprecedented dimensionality for
  unstructured (flexible) CTMC models within evolutionary biology.
\end{abstract}
\begin{document}

\maketitle
\thispagestyle{firststyle}
\ifthenelse{\boolean{shortarticle}}{\ifthenelse{\boolean{singlecolumn}}{\abscontentformatted}{\abscontent}}{}


\dropcap{P}hylogeographic methods
\cite{lemey2009bayesian, lemey2014unifying, holbrook2021massive,
  holbrook2022viral} model large-scale viral transmission between
human populations as a function of the shared evolutionary history of
the viral population of interest.  Data take the form of dates,
locations and genome sequences associated to individual viral
samples. Spatiotemporal structure interfaces with network structure
given by the phylogeny, or family tree, describing the viruses'
collective history beginning with the most recent common ancestor.
While one cannot directly observe this history, one may statistically
reconstruct the phylogenetic tree by positing that changes in the
viral genome happen randomly at regular intervals, thereby capturing
the intuition that viral samples with more differences between their
(aligned) sequences should find themselves further apart on the family
tree.

The continuous-time Markov chain (CTMC) \cite{norris_1997}
represents the gold-standard mathematical model for such evolution of
characters (e.g., nucleotides) within a fixed span of evolutionary
time. A CTMC defined over a discrete, $d$-element state space consists
of a row vector $\ppi_0$ whose individual components describes the
probability of inhabiting each of the possible states at time $t=0$,
as well as a $d\times d$ infinitesimal generator (or rate) matrix
$\QQ$ with non-negative off-diagonal elements $q_{ij}$, $i\neq j$, and
non-positive diagonal elements $q_{ii}=-\sum_j q_{ij}$. For any lag
$t\geq 0$, the matrix exponential
\cite{moler1978nineteen,moler2003nineteen} provides the Markov chain's
transition probability matrix
\begin{align}
    \P_t := e^{t \QQ} := \sum_{n=0}^\infty \frac{t^n\QQ^n}{n!}  \, ,
\end{align}
which has elements $[\P_t]_{ij}$ that dictate the probability of the
process jumping from state $i$ to state $j$ after time $t$. It is
straightforward to verify that $\P_t$ is a valid transition matrix,
having probability vectors for rows: if $\one$ and $\Zero$ are the
column vectors of ones and zeros, respectively, then $\QQ\one=\Zero$
and, therefore, $\P_t\one=\one$. The law of total probability then
provides the marginal probability of the process at any time
$t \geq 0$ as $\ppi_t=\ppi_0 e^{t\QQ}$.

Whether frequentist \cite{felsenstein1981evolutionary} or Bayesian
\cite{sinsheimer1996bayesian, yang1997bayesian, mau1999bayesian,
  suchard2001bayesian}, likelihood-based approaches to phylogenetic
reconstruction allow phylogenetic tree branch lengths to parameterize
time lags within the CTMC framework. We present the exact statement of
the phylogenetic CTMC paradigm below (see \cref{sec:sars:cov:app}).
Here, we note that the historical importance of tree-reconstruction
from aligned sequences leads to an early emphasis on the
sparse specification of $\QQ$ based on biologically motivated
assumptions
\cite{jukes1969evolution,kimura1980simple,hasegawa1985dating}.
Classical Markov chain Monte Carlo (MCMC) procedures
\cite{metropolis1953equation,hastings1970monte} work well for such
low-dimensional models.  But the phylogenetic CTMC framework has
applications beyond simple nucleotide substitution models. Within,
e.g., Bayesian phylogeography, the work in \cite{lemey2009bayesian}
provides a phylogenetic CTMC model for the spread of avian influenza
across $d=20$ global geographic locations but, for computational
reasons, favors a low-dimensional $\order{d}$ parameterization of
$\QQ$. Similarly, \cite{lemey2014unifying} model the spread of
influenza A H1N1 and H3N2 between as many as $d=26$ geographic regions
but---again for computational reasons---fit the model with
approximation techniques that provide no inferential guarantees.

Recently, \cite{magee2023random} demonstrate the feasibility of
approximate gradient-based methods for both maximum \emph{a
  posteriori} and full Bayesian inference of flexible and
fully-parameterized rate models and apply these methods to a
gold-standard $\order{d^2}$ mixed-effect CTMC model for the spread of
A H3N2 influenza between $d=14$ geographic locations. The usual
CTMC log-likelihood gradient calculations feature the matrix
exponential derivative (see \eqref{eq:sen:eq}, \eqref{eq:var:const:2} below)
\begin{align}
\label{eq:grad:def}
  &\nabla_{\JJ} e^{t \QQ}
    := \lim_{\epsilon \to 0}
    \frac{e^{t (\QQ+ \epsilon \JJ)} - e^{t\QQ} }{\epsilon} \\
  &=  e^{t \QQ} \sum_{n = 0}^\infty \frac{t^{n+1}}{(n+1)!}  
    \left( \sum_{\ell = 0}^n
    (-1)^\ell \binom{n}{\ell} \QQ^\ell \JJ \QQ^{n -\ell}\right)
    \label{eq:grad:series}
\end{align}
computed in the direction $\JJ$ of each of the $d^2$ natural basis
elements $\JJ_{ij}$ spanning the space of real-valued, $d\times d$
matrices ${\mathcal M}(d)={\mathcal M}(d,\bbR)$, thereby requiring at least $\order{d^5}$
floating point operations \cite{najfeld1995derivatives}.

Within the phylogenetic CTMC models of Section \ref{sec:sars:cov:app}, log-likelihood derivative computations that require $\nabla_{\JJ} e^{t \QQ}$ balloon to $\order{KNd^5}$, for $N$ the number of biological specimens observed and $K$ the number of parameters parameterizing $\QQ$.
To address this overwhelming computational cost, \cite{magee2023random} leverage
the simplistic approximation obtained by setting $n=0$ within \eqref{eq:grad:series}:
\begin{align}\label{eq:firstOrder}
	\widetilde{\nabla}_{\JJ} e^{t\QQ} := t e^{t\QQ}\JJ   \, .
\end{align}
\cite{magee2023random} show that this approximation helps reduce total cost to $\order{Kd^2 + Nd^3}$ and use this speedup within
surrogate-trajectory Hamiltonian Monte Carlo (see \cref{sec:Sur:trad:HMC}) 
to obtain a
34-fold improvement in effective sample size per second (ESS/s) over
random-walk MCMC within their 14 region phylogeographic example.
When trying to explain the remarkable empirical performance of the naive
approximation, the authors derive an error upper
bound (for an arbitrary matrix norm)
\begin{align}\label{eq:initial:bnd}
  \lVert \widetilde{\nabla}_{\JJ}e^{t \QQ}  - \nabla_{\JJ} e^{t \QQ} \rVert
  \leq \frac{ \| \JJ \| \lVert\QQ \rVert }{2} (e^{2t} -2t-1) 
\end{align}
that fails to leverage the specific forms of $\QQ$ and $\JJ$.
Notably, this bound explodes as either $t$ or $\lVert \QQ \rVert$
diverges to $\infty$.  Of course, the latter quantity would be
expected to grow large with dimension $d$ without more careful
structural assumptions, e.g., that $\QQ$ is a rate matrix belonging to
the class
\begin{align}\label{eq:rate:mat}
  \mathcal{R}(d) :=
  \Big\{ \QQ \in {\mathcal M}(d) |
  \QQ_{ij} \geq 0 \text{ for } j \not= i ,\, \sum_{j =1}^d \QQ_{ij} = 0 \Big\} \, .
\end{align}

In the following, we use the finer structural properties of $\QQ$ to
obtain more precise bounds on
\begin{align}\label{eq:true:diff}
  \Z(t) := \nabla_\JJ e^{t \QQ} - t e^{t \QQ} \JJ.
\end{align}
In \cref{thm:ass:error}, we provide an affine (in $t$) correction to the 
approximation \eqref{eq:firstOrder} that yields an exponentially tight 
$t \to \infty$ asymptotic for the error \eqref{eq:true:diff}. Then, in 
\cref{t:lambda_n-1(Q)->-infty_non-symm_a.s.}, we establish
precise probabilistic bounds in the high-dimensional $d \to \infty$
limit for a large class of randomly drawn rate matrices
$\QQ \in \mathcal{R}$.  Here we show, for any $\QQ \in \mathcal{R}$
whose off-diagonal elements are determined by independently and identically distributed (iid) draws from a
positive, sub-exponential distribution $F$, that all of the non-zero
singular values grow as $\sigma_j(\QQ) \sim d$ along with
asymptotically valid almost sure bounds for these rates as
$d \to \infty$. 

In regards to this second result,
\cref{t:lambda_n-1(Q)->-infty_non-symm_a.s.}, note that random rate (or
`Laplacian') matrices have attracted a great deal of attention from the
probability research community, especially in regard to their
high-dimensional properties; see e.g., \cite{takahashi:1969,bai:1999,
  chafai:2010, bordenave:caputo:chafai:2012, chatterjee:hazra:2022,
  nakerst:denisov:haque:2023}. In particular, seminal papers such as
\cite{bryc:dembo:jiang:2006, ding:jiang:2010,
  bordenave:caputo:chafai:2014} establish broad characterizations of
bulk behavior for the Laplacian eigenspectrum.  One contribution of
this paper, of independent interest, is a short and self-contained
construction of useful bounds for the singular values of
$\QQ \in \mathcal{R}$.

In \cref{thm:bride:of:frankinstein} and
\cref{cor:rig:det:rand:wrapup}, we show how \cref{thm:ass:error} and
\cref{t:lambda_n-1(Q)->-infty_non-symm_a.s.} combine to provide a more
refined analysis of $\E$ for suitable randomly generated
$\QQ \in \mathcal{R}(d)$. In \cref{thm:bride:of:frankinstein}, we
establish that particular terms appearing in the bound
\eqref{eq:ass:error} in \cref{thm:ass:error} decay with a rate on the
order of $1/d$ in the operator norm topology for large $d$ for certain
classes of symmetric generators $\QQ$. This class includes the random
elements considered in \cref{t:lambda_n-1(Q)->-infty_non-symm_a.s.}.
Here, although \cref{cor:rig:det:rand:wrapup} applies only for
symmetric matrices composed of sub-exponential draws, we provide
strong supplemental numerical evidence that our bounds remain valid
well beyond this special symmetric, sub-exponential special case; see
\cref{rmk:better:approx} and \cref{fig:app}.

One notable practical implication of \cref{thm:bride:of:frankinstein},
\cref{cor:rig:det:rand:wrapup} and \cref{rmk:better:approx} is the
identification of a further correction to
\eqref{eq:firstOrder}. Crucially this correction has the same
$\order{d^3}$ computational cost as \eqref{eq:firstOrder} while
leading to an asymptotically temporally uniformly accurate
approximation of $\nabla_{\JJ} e^{t \QQ}$.  See
\cref{rmk:better:approx} and \eqref{eq:better:help:on:sale} below.

 \cref{sec:emp} contains simulation studies comparing: accuracy of matrix exponential derivative approximations for different distributional assumptions on the generator matrix; the posterior distributions obtained using surrogate-trajectory Hamiltonian Monte Carlo (\cref{sec:Sur:trad:HMC}) and traditional Hamiltonian Monte Carlo; and parameter identification under different priors on generator matrix elements.

In \cref{sec:sars:cov:app} we follow these theoretical and empirical investigations
with an application of the naive, first-order gradient approximation
\eqref{eq:firstOrder} to a challenge in phylogeography requiring
Bayesian inference of a rate-matrix of unprecedented dimensionality.
Namely, we apply the approximation to a gold-standard mixed-effects,
phylogenetic CTMC model that uses 1,897 parameters to describe the
spread of SARS-CoV-2 across a $d=44$ dimensional state space
consisting of different global geographic locations.  Such an
application complements the empirical studies of
\cite{magee2023random} in a manner that emphasizes the naive
approximation's potential for impact.

\section{Rigorous Results}
\label{sec:rig:results}

This section lays out our rigorous results \cref{thm:ass:error},
\cref{thm:bride:of:frankinstein},
\cref{t:lambda_n-1(Q)->-infty_non-symm_a.s.} and
\cref{cor:rig:det:rand:wrapup}, the proofs of which appear in the
supplemental material below.

In what follows we adopt the following notational conventions. For any
${\mathbf A} \in {\mathcal M}(d)$, we list the associated (not
necessarily distinct) eigenvalues of $\mathbf{A}$ in ascending order
according to their real part, namely,
\begin{align}\label{eq:EV:real:ass}
  \Re \lambda_1({\mathbf A}) \leq \cdots \leq \Re \lambda_d({\mathbf A}).
\end{align}
Similarly, the singular values of ${\mathbf A}$ are written in
ascending order
\begin{align}\label{eq:SV:ass}
  \sigma_1({\mathbf A}) \leq \cdots \leq \sigma_d({\mathbf A}).
\end{align}
We let ${\mathcal S}(d) = {\mathcal S}(d,\bbR)$ and ${\mathcal S}(d,\bbC)$ represent the spaces of $d \times d$ symmetric and Hermitian matrices,
respectively.

We make use of multiple matrix norms leading to
materially different bounds as $d \rightarrow \infty$ \cite{trefethen2022numerical}.  Take
\begin{align}\label{eq:frob:norm}
  \|\mathbf{A}\|_{F} := \sqrt{\sum_{i,j =1}^d \mathbf{A}_{ij}^2} =
  \sqrt{\sum_{j =1}^d \sigma_j(\mathbf{A})^2}
\end{align}
for the \emph{Frobenius norm} and
\begin{align}\label{eq:op:norm}
  \|{\mathbf A}\|_{\textnormal{op}} :=  \sqrt{\lambda_d({\mathbf
  A}^*{\mathbf A})} = \sqrt{\lambda_d({\mathbf A}{\mathbf A}^*)}
  = \sigma_d(\mathbf{A})
\end{align}
for the \emph{operator norm} of ${\mathbf A}$.  Finally note that when we
simply write $\| \cdot \|$ the statement then holds for any valid
matrix norm as in our formulation of \cref{thm:ass:error}.

\subsection*{Deterministic bounds on approximation
  error in time}

We begin by deriving a dynamical equation for the error $\Z(t)$, defined in
\eqref{eq:true:diff}. Recall that $\X(t) := e^{t \QQ}$ for any
$\QQ \in \mathcal{M}(d)$ obeys the (matrix-valued) ordinary
differential equation
\begin{align}
   \frac{d\X}{dt} = \QQ \X, \quad \X(0) = \II.
\end{align}
Setting
$\Y^\epsilon = \epsilon^{-1} ( e^{t (\QQ+ \epsilon \JJ)}- e^{t \QQ})$
and taking a limit as $\epsilon \to 0$ we find that
$\Y = \nabla_{\JJ} e^{t \QQ}$ obeys
\begin{align}
  \frac{d \Y}{dt}  = \QQ \Y + \JJ \X = \QQ \Y + \JJ e^{t \QQ},
  \quad \Y(0) = \Zero.
	\label{eq:sen:eq}
\end{align}
Thus, variation of constants yields that, for any $t \geq 0$,
\begin{align}
	\nabla_{\JJ} e^{t \QQ} &= e^{t \QQ} \int_0^t e^{-s\QQ} \JJ e^{s\QQ} ds \label{eq:var:const:1}\\
	&=  e^{t \QQ} \sum_{k,m = 0}^\infty \int_0^t \frac{(-s)^{k} \QQ^k \JJ s^{m}\QQ^m}{k!m!}ds
		\notag\\
	&=  e^{t \QQ} \sum_{n = 0}^\infty \frac{t^{n+1}}{(n+1)!}  
	\left( \sum_{\ell = 0}^n (-1)^\ell \binom{n}{\ell} \QQ^\ell \JJ \QQ^{n -\ell}\right).
	\label{eq:var:const:2}
\end{align}
Taking the first order ($n=0$) approximation in \eqref{eq:var:const:2} produces
\eqref{eq:firstOrder}.  Note that, from \eqref{eq:var:const:1}, this
approximation $\widetilde{\nabla}e^{t\QQ}\JJ$ is evidently exact in the
special case when $\JJ$ and $\QQ$ commute.

Next notice that, if we differentiate $\widetilde{\Y} := t e^{t \QQ} \JJ$ in $t$, we
find that $\widetilde{\Y}$ obeys
\begin{align}\label{eq:sen:eq:apx}
  \frac{d \widetilde{\Y}}{dt} = \QQ \widetilde{\Y} +  e^{t \QQ} \JJ,
  \quad \widetilde{\Y}(0) = \Zero.
\end{align}
Thus, taking the error $\Z(t)$ as in \eqref{eq:true:diff} and
combining \eqref{eq:sen:eq} with \eqref{eq:sen:eq:apx} yields
\begin{align}
\label{eq:Z:dym}
  \frac{d\Z}{dt} = \QQ \Z + \JJ e^{t \QQ} -  e^{t \QQ} \JJ,
  \quad \Z(0) = \Zero.
\end{align}
Hence, again integrating this expression, we find
\begin{align}
  \Z(t) &= e^{t \QQ} \int_0^t (e^{-s \QQ} \JJ e^{s \QQ} - \JJ) ds
 \label{eq:ass:procs:0}\\
	&=  e^{t \QQ} \sum_{n = 1}^\infty \frac{t^{n+1}}{(n+1)!}  
   \left( \sum_{\ell = 0}^n (-1)^\ell
   \binom{n}{\ell} \QQ^\ell \JJ \QQ^{n -\ell}\right)
   \label{eq:ass:procs}
\end{align}
as could also be directly deduced from \eqref{eq:var:const:1},
\eqref{eq:var:const:2}.

Given a rate matrix $\QQ$ in $\mathcal{R}(d)$, recall that
$\Re \lambda_{d-1}(\QQ) \leq \lambda_{d}(\QQ)= 0$ by
the Gershgorin circle theorem \cite[Theorem
6.1.1]{horn:johnson:2012}.  Imposing a further
non-degeneracy assumption (e.g., that $\Re \lambda_{d-1}(\QQ)< 0$) we
therefore have an exponential decay in $\QQ e^{t\QQ}$.  This starting
point suggests that, under fairly general conditions, we may decompose
\eqref{eq:ass:procs:0} into a component where $\QQ e^{t\QQ}$ induces
an exponential decay in time and a complementary component taking the
form of a time-affine correction term.

These observations lead to the following theorem, the proof of which appears
in the supplement as \cref{sec:Thm:det:main}.
\begin{theorem}\label{thm:ass:error}
  Suppose that $\QQ, \JJ \in \mathcal{M}(d)$ for some $d \geq 1$.  We
  assume that we can find an element $\QQ^+ \in \mathcal{M}(d)$ such that
  $\QQ$ is a generalized inverse of $\QQ^+$, namely,
  \begin{align}\label{eq:GI:prop}
    \QQ^+ \QQ \QQ^+ = \QQ^+
  \end{align}
  and such that
  \begin{align}\label{eq:MP:proj:prop}
    e^{\tau \QQ}(\II - \QQ^+ \QQ) \!=\!  \II - \QQ^+ \QQ, \;
    (\II - \QQ \QQ^+)e^{\tau \QQ} \!=\!  \II - \QQ \QQ^+,
  \end{align}
  for any $\tau \in \bbR$. Furthermore, we suppose that $\QQ^+$ and
  $\QQ$ commute
  \begin{align}\label{eq:GI:com}
    \QQ \QQ^+ = \QQ^+ \QQ.
  \end{align}
  Finally, taking $\|\cdot\|$ be any matrix norm, we assume that
  \begin{align}\label{eq:qetq:exp:decay}
    \| \QQ e^{\tau \QQ}\| \leq C_0 e^{-\kappa \tau},
    \quad
    \text{ for all } \tau \geq 0,
  \end{align}
  where the constants $C_0 > 0, \kappa > 0$
  are independent of $\tau$.  Then, under these circumstances, 
  \begin{align}\label{eq:ass:error}
    \|\QQ^+ \JJ (\II - \QQ \QQ^{+})-& (\II -\QQ^+ \QQ) \JJ \QQ^{+}
        +t(\II -\QQ^+ \QQ) \JJ \QQ \QQ^{+}
                \notag\\
      &
        +  \nabla_\JJ e^{t \QQ} - t  e^{t\QQ} \JJ\| \leq C(1+ t)e^{-\kappa t}
\end{align}
for any $t \geq 0$.  Here, $C > 0$ is a $t$-independent
constant which is given explicitly as
\begin{align}
  C_0 \big(\| (\II -\QQ^+ \QQ) \JJ (\QQ^{+})^2\|
  \!&+\! \|(\QQ^+)^2\JJ (\II - \QQ \QQ^{+})\|
    \!+\! \|\QQ^+ \hspace{0.5mm}\JJ\|\big)
    \notag\\
    &+ C_0^2 \| \QQ^+  \JJ  \QQ^{+} \|.
      \label{eq:exp:error:ass:er}
  \end{align}
\end{theorem}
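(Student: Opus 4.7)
The plan is to start from the variation-of-constants representation \eqref{eq:ass:procs:0} for $\Z(t)$, decompose the integrand via the complementary projectors $P_0 := \II - \QQ^+\QQ$ and $P_1 := \QQ\QQ^+$ furnished by the generalized inverse $\QQ^+$, evaluate in closed form the pieces touching $P_0$ using $\QQ^+$, and absorb the remaining conjugation-type integral into an $O((1+t)e^{-\kappa t})$ error via \eqref{eq:qetq:exp:decay}. First, I would record the bookkeeping identities implied by \eqref{eq:GI:prop}--\eqref{eq:GI:com}: $P_0 = \II - \QQ^+\QQ = \II - \QQ\QQ^+$ by \eqref{eq:GI:com}; differentiating \eqref{eq:MP:proj:prop} at $\tau = 0$ gives $\QQ P_0 = P_0\QQ = 0$, whence $e^{\tau\QQ}P_0 = P_0\, e^{\tau\QQ} = P_0$ for every $\tau \in \bbR$; the reflexive property \eqref{eq:GI:prop} together with \eqref{eq:GI:com} yields $\QQ^+P_0 = P_0\QQ^+ = 0$, $\QQ^+P_1 = P_1\QQ^+ = \QQ^+$, and the key identity $\QQ^+ = (\QQ^+)^2\QQ$; and $\QQ^+$ commutes with $\QQ$ and hence with $e^{\tau\QQ}$. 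Combined with \eqref{eq:qetq:exp:decay}, these produce the exponential-decay estimates $\|\QQ^+ e^{\tau\QQ}\| \leq C_0\|(\QQ^+)^2\|e^{-\kappa\tau}$ and $\|e^{\tau\QQ}P_1\| \leq C_0\|\QQ^+\|e^{-\kappa\tau}$ for $\tau \geq 0$.

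Next, I would split $\JJ = (P_0 + P_1)\JJ(P_0 + P_1)$ into its four blocks and substitute into \eqref{eq:ass:procs:0}. The $P_0\JJ P_0$ block contributes nothing since $e^{-s\QQ}P_0\JJ P_0\,e^{s\QQ} = P_0\JJ P_0$. Each cross block $P_1\JJ P_0$ and $P_0\JJ P_1$ generates a pure exponential integral of the form $\int_0^t(e^{\pm s\QQ} - \II)ds = -tP_1 \pm \QQ^+(e^{\pm t\QQ} - \II)$, which evaluates in closed form using the identities above. Multiplying by $e^{t\QQ}$ and collecting, the cross-block contribution decomposes into a \emph{non-decaying} part that matches, after cancellation, the three correction terms $\QQ^+\JJ(\II - \QQ\QQ^+)$, $(\II - \QQ^+\QQ)\JJ\QQ^+$, $t(\II - \QQ^+\QQ)\JJ\QQ\QQ^+$ on the left-hand side of \eqref{eq:ass:error}, plus \emph{residuals} of the form $\QQ^+ e^{t\QQ}\JJ P_0$, $P_0\JJ\QQ^+ e^{t\QQ}$ and $te^{t\QQ}P_1\JJ P_0$. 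Rewriting these residuals as $\QQ e^{t\QQ}(\QQ^+)^2\JJ P_0$, $P_0\JJ(\QQ^+)^2\QQ e^{t\QQ}$ and $t\QQ e^{t\QQ}\QQ^+\JJ P_0$ via $\QQ^+ e^{t\QQ} = \QQ e^{t\QQ}(\QQ^+)^2$ and $e^{t\QQ}P_1 = \QQ e^{t\QQ}\QQ^+$ and applying \eqref{eq:qetq:exp:decay} directly produces the first three summands of the constant \eqref{eq:exp:error:ass:er}.

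The main obstacle is the remaining $P_1\JJ P_1$ block
\begin{align*}
\Z_{11}(t) := e^{t\QQ}\int_0^t\bigl(e^{-s\QQ}P_1\JJ P_1\, e^{s\QQ} - P_1\JJ P_1\bigr)ds,
\end{align*}
which is not obviously small because $e^{-s\QQ}$ \emph{grows} exponentially on the range of $P_1$ as $s\to\infty$, so neither factor in the conjugation is individually controlled. The key move is to push the prefactor $e^{t\QQ}$ inside the integral and perform the change of variable $u = t - s$, whereupon
\begin{align*}
\Z_{11}(t) = \int_0^t\bigl(\QQ e^{u\QQ}\bigr)\bigl(\QQ^+\JJ\QQ^+\bigr)\bigl(\QQ e^{(t-u)\QQ}\bigr)du - te^{t\QQ}P_1\JJ P_1,
\end{align*}
using $e^{u\QQ}P_1 = \QQ e^{u\QQ}\QQ^+$ and $P_1 e^{(t-u)\QQ} = \QQ^+\QQ e^{(t-u)\QQ}$. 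Both outer factors $\QQ e^{\cdot\QQ}$ now decay exponentially by \eqref{eq:qetq:exp:decay}, so the integrand is bounded uniformly in $u$ by $C_0^2\|\QQ^+\JJ\QQ^+\|e^{-\kappa t}$, producing the $C_0^2\|\QQ^+\JJ\QQ^+\|\,t\,e^{-\kappa t}$ contribution to \eqref{eq:exp:error:ass:er}. The boundary term $te^{t\QQ}P_1\JJ P_1 = t\QQ e^{t\QQ}\QQ^+\JJ P_1$ is handled analogously and combines with the cross-block residual $te^{t\QQ}P_1\JJ P_0$ to yield the $C_0\|\QQ^+\JJ\|$ summand of \eqref{eq:exp:error:ass:er}, completing the bound \eqref{eq:ass:error}.
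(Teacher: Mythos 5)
Your proposal is correct and follows essentially the same route as the paper's proof: both start from \eqref{eq:ass:procs:0}, split $\JJ$ via the projectors $\II-\QQ^+\QQ$ and $\QQ\QQ^+$ (the paper's \eqref{eq:big:yuck:1} is exactly your four-block decomposition), evaluate the cross blocks in closed form to extract the affine correction terms, and bound the remaining $\QQ\QQ^+\JJ\QQ\QQ^+$ conjugation integral by writing both exponential factors as $\QQ e^{\cdot\QQ}$ and invoking \eqref{eq:qetq:exp:decay}, recovering the constant \eqref{eq:exp:error:ass:er} exactly. The only differences are organizational (naming the projectors and integrating $\int_0^t(e^{\pm s\QQ}-\II)\QQ\QQ^+\,ds$ directly rather than via the total-derivative identity \eqref{eq:big:yuck:2}).
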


\begin{remark}\label{rmk:mat:class:dcy}
  To illuminate the scope of \cref{thm:ass:error}, we have the
  following three classes of matrices maintaining the conditions
  \eqref{eq:GI:prop}--\eqref{eq:qetq:exp:decay} as follows.
  \begin{itemize}
  \item[(i)] Suppose that $\QQ \in \mathcal{M}(d)$ is such that
    \begin{align}\label{eq:ev:d:cond}
      \Re \lambda_{d-1}(\QQ) <  \Re \lambda_d({\mathbf Q}) \leq 0,
       \lambda_d({\mathbf Q})  \text{ is simple }
    \end{align}
    and such that, if $\lambda_d({\mathbf Q})$ has an imaginary component,
    then its real part is strictly negative.  Under these
    circumstances, writing $\QQ$ in its Jordan canonical form yields,
    for some $m \geq 1$,
    \begin{equation}\label{e:Q=Jordan_decomp}
      \QQ = \M \diag (J_{1},\hdots,J_{m-1},\lambda_d(\QQ))\M^{-1}.
    \end{equation}
    Here, under \eqref{eq:ev:d:cond} each of
    these blocks $J_j$ must be invertible and so
    we may take
    \begin{align}
      \QQ^+ := \M \,
      \diag (J_{1}^{-1},\hdots,J_{m-1}^{-1},\lambda_d(\QQ)^{+})
      \, \M^{-1},
    \end{align}
    where 
    \begin{align}\label{eq:eig:gen:inv}
      \lambda_d(\QQ)^{+} =
      \begin{cases}
        0& \text{ if } \lambda_d(\QQ) = 0,\\
        \lambda_d(\QQ)^{-1}& \text{ otherwise.}
       \end{cases}
    \end{align}
  \item[(ii)] We next consider the case where $\QQ \in \mathcal{M}(d)$
    is diagonalizable and its spectrum lies strictly on the left half
    plane or at the origin.  This time we can write
    \begin{align}\label{eq:spec:decomp}
    \QQ = \M \LLambda \M^{-1} \text{ where }
      \LLambda = \diag (\lambda_1(\QQ), \ldots ,\lambda_d(\QQ))
    \end{align}
    and we set 
    \begin{align}\label{eq:gen:inv:diag}
     \! \!\! \!\QQ^+ \! \! \!= \!\M \LLambda^+\M^{-1} \text{ with }
      \LLambda^{+} \! \! \!= \! \diag (\lambda_1(\QQ)^+\!\! \! \!, \ldots
      ,\lambda_d(\QQ)^+).
      \end{align}
    The complex numbers $\lambda_j(\QQ)^+$,
    $j = 1, \ldots, d$ are defined as in \eqref{eq:eig:gen:inv}.
  \item[(iii)] Finally, we specialize to the case where
    $\QQ \in \mathcal{S}(d)$ is symmetric. In this case,
    $\QQ = \UU \LLambda \UU^*$, where $\UU$ is a unitary matrix and
    $\LLambda = \diag (\lambda_1(\QQ), \ldots ,\lambda_d(\QQ))$,
    $\lambda_j(\QQ)$ are its (real) eigenvalues.  We suppose that
    these eigenvalues are all non-positive,
    \begin{align}\label{eq:ev:d:cond:sym:case}
      \lambda_d({\mathbf Q}) \leq 0,
    \end{align}
    Here we take $\QQ^+$ as the Moore-Penrose inverse, namely,
    \begin{align}\label{eq:MP:inv:sym:case}
      \QQ^+ = \UU \LLambda^+ \UU^*,
   \end{align}
    where $\LLambda^+$ is as in \eqref{eq:gen:inv:diag}.   
  \end{itemize}
\end{remark}
In anticipation of \cref{t:lambda_n-1(Q)->-infty_non-symm_a.s.} below
and our desired application in \cref{sec:sars:cov:app} we are
preoccupied with the dimensional dependence of the constants in
\eqref{eq:qetq:exp:decay}, \eqref{eq:ass:error}
and \eqref{eq:exp:error:ass:er} in our formulation of
\cref{thm:ass:error}.  We next provide some such desirable bounds in
case (iii) of \cref{rmk:mat:class:dcy}.  Note that analogous results
for generators $\QQ$ in the classes (i) or (ii) would seemingly
require a delicate analysis of the associated eigenspaces (i.e., of the
structure of $\M$ in \eqref{e:Q=Jordan_decomp} or
\eqref{eq:spec:decomp} respectively).  However,
\cref{rmk:non:sym:num:ev} and \cref{fig:app} provide numerical evidence
of a broader scope for dimensionally improving approximations beyond
the symmetric case, at least for certain classes of randomly drawn
matrices.
\begin{theorem}\label{thm:bride:of:frankinstein}
  Let symmetric $\QQ \in \mathcal{S}(d)$ be non-positive, i.e.,  suppose that
  \eqref{eq:ev:d:cond:sym:case} holds.  Take $\QQ^+$ as in
  \eqref{eq:MP:inv:sym:case} and define
    \begin{align}\label{eq:first:neg}
      d_{-} = \max\{ 1 \leq j \leq d | \Re \lambda_{j}(\QQ) < 0\}.
    \end{align}
    Then, for any $\JJ \in \mathcal{M}(d)$,
  \begin{align}
      \| t(\II&  -\QQ^+  \QQ) \JJ \QQ \QQ^{+}
                +  \nabla_\JJ e^{t \QQ} - t  e^{t\QQ} \JJ\|_{F} 
                \label{eq:bride:of:frankinstein:1}\\
    \leq&  
    \Big( \! \sqrt{d_{-}}|\lambda_1(\QQ)| \! \cdot\!  \Big[ 2 \sqrt{d-d_{-}} \| \QQ^+\|_F^2 +  \| \QQ^+\|_F
    \notag \\ 
              &\qquad \qquad \qquad \;
    + \sqrt{d_{-}}|\lambda_1(\QQ)|\| \QQ^+\|_F^2 \Big]   (1+t)e^{- t|\lambda_{d-}(\QQ)|}
   \notag \\ 
&\quad+  2\sqrt{d - d_{-}} \| \QQ^+\|_F  \Big)\|\JJ\|_{F},  \notag
  \end{align}
  with $\| \QQ^+\|_F^2 := \sum^{d_-}_{k=1}\frac{1}{\lambda^2_{k}(\QQ)}$, whereas
  \begin{align}
    \| t(&\II -\QQ^+ \QQ) \JJ \QQ \QQ^{+}
         +  \nabla_\JJ e^{t \QQ} - t  e^{t\QQ} \JJ\|_{\op}
         \label{eq:bride:of:frankinstein}\\
    \leq &
           \Big(
           \frac{|\lambda_1(\QQ)|(2 + |\lambda_{d_-}(\QQ)|  + |\lambda_1(\QQ)|)}{\lambda^2_{d-}(\QQ)} 
           (1+t) e^{-t |\lambda_{d_-}(\QQ)|}
           \notag
    \\ 
              &\qquad+ \frac{2}{|\lambda_{d_{-}}(\QQ)|}\Big) \|\JJ\|_{\op}.
                \notag
  \end{align}
  Under the further assumption that $\lambda_{d}(\QQ) = 0$
  and
  \begin{align}
    \mu_1 d \leq |\lambda_{d-1}(\QQ)|
    \leq |\lambda_{1}(\QQ)| \leq \mu_2 d
  \end{align}
  for some $0< \mu_1\leq \mu_2$, we have
    \begin{align}\label{eq:bride:of:frankinstein:1:rm:app}
    \| t(&\II  -\QQ^+ \QQ) \JJ \QQ \QQ^{+}
         +  \nabla_\JJ e^{t \QQ} - t  e^{t\QQ} \JJ\|_{F}
         \\
     \leq &    \Big(\frac{\mu_2}{\mu_1^2}  \cdot \Big[ 2 \sqrt{d} + \mu_1 d  
            +  \mu_2 d^2\Big](1+t)^{-t \mu_1 d} +  \frac{2}{\mu_1 \sqrt{d}} \Big) \|\JJ\|_{F},
            \notag
  \end{align}
  and that
    \begin{align}\label{eq:bride:of:frankinstein:0:rm:app}
    \| t(\II & -\QQ^+ \QQ) \JJ \QQ \QQ^{+}
         +  \nabla_\JJ e^{t \QQ} - t  e^{t\QQ} \JJ\|_{\op}\\
      \leq &  \Big( \frac{\mu_2}{\mu^2_1}\cdot\Big[   \frac{2}{d} + \mu_1  +\mu_2\Big]
             (1+t)e^{-t \mu_1 d} + \frac{2}{\mu_1 d}\Big)
            \|\JJ\|_{\op}.\notag
  \end{align}
\end{theorem}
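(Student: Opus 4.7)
The plan is to apply \cref{thm:ass:error} in the symmetric case (case (iii) of \cref{rmk:mat:class:dcy}) with $\QQ^+$ taken to be the Moore--Penrose inverse \eqref{eq:MP:inv:sym:case}, then evaluate the abstract constants $C_0$, $\kappa$, and $C$ appearing in \eqref{eq:qetq:exp:decay}--\eqref{eq:exp:error:ass:er} in each of the two norms, and finally transfer the two $t$-independent ``correction'' terms $\QQ^+\JJ(\II - \QQ\QQ^+)$ and $(\II - \QQ^+\QQ)\JJ\QQ^+$ from the left-hand side of \eqref{eq:ass:error} to the right-hand side by the triangle inequality. The three concluding inequalities \eqref{eq:bride:of:frankinstein:1:rm:app} and \eqref{eq:bride:of:frankinstein:0:rm:app} will then drop out by substituting the dimension-dependent eigenvalue hypotheses $|\lambda_{d-1}(\QQ)|\geq \mu_1 d$ and $|\lambda_1(\QQ)|\leq \mu_2 d$.

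First, working from $\QQ = \UU \LLambda \UU^*$, I verify the hypotheses \eqref{eq:GI:prop}--\eqref{eq:GI:com}: $\QQ^+\QQ\QQ^+=\QQ^+$ and $\QQ^+\QQ = \QQ\QQ^+$ are immediate from the diagonalization, and since $\II - \QQ\QQ^+$ is the orthogonal projector onto $\ker\QQ$ on which $e^{\tau\QQ}$ acts as the identity, \eqref{eq:MP:proj:prop} is likewise direct. For the decay condition \eqref{eq:qetq:exp:decay}, $\QQ e^{\tau\QQ}$ is diagonalized in the same basis with eigenvalues $\lambda_k(\QQ) e^{\tau\lambda_k(\QQ)}$. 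For $1\leq k \leq d_-$ we have $\lambda_k(\QQ)<0$ and $|\lambda_k(\QQ)|\geq|\lambda_{d_-}(\QQ)|$, which yields $\|\QQ e^{\tau \QQ}\|_F^2\leq d_-\,\lambda_1(\QQ)^2 \, e^{-2\tau|\lambda_{d_-}(\QQ)|}$ and $\|\QQ e^{\tau \QQ}\|_{\op}\leq |\lambda_1(\QQ)|\,e^{-\tau|\lambda_{d_-}(\QQ)|}$. Hence I may take $\kappa = |\lambda_{d_-}(\QQ)|$ in both cases, with $C_0 = \sqrt{d_-}|\lambda_1(\QQ)|$ in the Frobenius norm and $C_0 = |\lambda_1(\QQ)|$ in the operator norm.

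Next, I bound each of the four pieces of the constant $C$ in \eqref{eq:exp:error:ass:er} by standard submultiplicativity $\|A B C\|\leq \|A\|\,\|B\|\,\|C\|$, using $\|\II-\QQ\QQ^+\|_F = \sqrt{d-d_-}$, $\|\II-\QQ\QQ^+\|_{\op}=1$, $\|\QQ^+\|_{\op}=1/|\lambda_{d_-}(\QQ)|$, and $\|(\QQ^+)^2\|_{\op}\leq \|\QQ^+\|_F^2$; for the two ``correction'' terms left over after the triangle inequality I use $\|\QQ^+\JJ(\II-\QQ\QQ^+)\|_F \leq \|\QQ^+\|_{\op}\,\|\JJ\|_F\,\|\II-\QQ\QQ^+\|_F \leq \sqrt{d-d_-}\,\|\QQ^+\|_F\,\|\JJ\|_F$ and symmetrically for the other correction, giving the additive $t$-independent tail. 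The main bookkeeping obstacle will be, for each factor in $C$, to choose \emph{which} norm (operator vs.\ Frobenius) to allocate to each of the three factors in the product so as to obtain exactly $\|\QQ^+\|_F^2$ on the second-derivative--style terms and $\|\QQ^+\|_F$ on the linear term while keeping the $\JJ$-dependence in $\|\JJ\|_F$ (respectively $\|\JJ\|_{\op}$); any other allocation produces either a suboptimal dimensional scaling or norm mismatches. The operator-norm estimate \eqref{eq:bride:of:frankinstein} is easier since one can use $\|\cdot\|_{\op}$ uniformly.

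Finally, the dimensional specializations \eqref{eq:bride:of:frankinstein:1:rm:app} and \eqref{eq:bride:of:frankinstein:0:rm:app} are direct substitutions: the hypothesis $\lambda_d(\QQ)=0$ gives $d_-=d-1$ so that $\sqrt{d-d_-}=1$, while $\|\QQ^+\|_F^2 = \sum_{k=1}^{d-1}\lambda_k(\QQ)^{-2}\leq (d-1)/(\mu_1 d)^2 \leq 1/(\mu_1^2 d)$, $\|\QQ^+\|_{\op}\leq 1/(\mu_1 d)$, $|\lambda_{d_-}(\QQ)|\geq \mu_1 d$, and $|\lambda_1(\QQ)|\leq \mu_2 d$. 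Substituting these four estimates into \eqref{eq:bride:of:frankinstein:1} and \eqref{eq:bride:of:frankinstein} and collecting terms by powers of $d$ produces the stated simplified bounds.
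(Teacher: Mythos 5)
Your proposal is correct and follows essentially the same route as the paper's proof: invoke \cref{thm:ass:error} with the Moore--Penrose inverse of case (iii), identify $\kappa = |\lambda_{d_-}(\QQ)|$ and $C_0 = \sqrt{d_-}\,|\lambda_1(\QQ)|$ (Frobenius) or $C_0 = |\lambda_1(\QQ)|$ (operator) from the spectral decomposition, evaluate $\|\QQ^+\|$ and $\|\II-\QQ\QQ^+\|$ in each norm, move the two $t$-independent correction terms across by the triangle inequality, and substitute the eigenvalue hypotheses for the dimensional specializations. The norm-allocation bookkeeping you flag is handled identically (if implicitly) in the paper, and your final substitutions reproduce the stated constants exactly.
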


\subsection*{High-dimensional asymptotics
  via random matrix theory}

We turn to our probabilistic bounds on the singular values of randomly
generated rate matrices, \cref{t:lambda_n-1(Q)->-infty_non-symm_a.s.}.
Although interesting in its own right, this result leads to
consequences for the bounds in \eqref{eq:qetq:exp:decay} and
\eqref{eq:ass:error} when applied in \cref{thm:bride:of:frankinstein}.
Before proceeding, we briefly introduce further mathematical
preliminaries associated with the so-called sub-exponential random
variables.  To avoid confusion, note that the following definition uses the term in the same way as, e.g., \cite{vershynin:2018}, but that other definitions that mean quite the opposite (i.e., heavier than exponential tails) appear in the literature \cite{goldie1998subexponential}.
\begin{definition}
  \label{def:sub:exp:RV}
  A random variable $X$ is called \textit{sub-exponential} if there
  exists some constant $K > 0$ for which its tails satisfy
\begin{equation}\label{e:def_subexp_RV}
\bbP(|X|\geq t) \leq 2 e^{-t/K}, \quad \forall t \geq 0.
\end{equation}
In this case, the \textit{sub-exponential norm} of $X$ is defined by
\begin{equation}\label{e:sub-exp_norm}
\|X\|_{\psi_1} = \inf\big\{s > 0: \bbE e^{|X|/s} \leq 2\big\}.
\end{equation}
The \textit{class of sub-exponential distributions} is denoted by
\begin{align*}
L_{\psi_1} = \big\{F_X(dx): \|X\|_{\psi_1} < \infty \big\}.
\end{align*}
\end{definition}

\begin{remark}\label{rmk:exp:mom}
  In fact, by \cite[Proposition 2.7.1, p.\ 33]{vershynin:2018},
  condition \eqref{e:def_subexp_RV} is equivalent to the existence of
  some $s_0 > 0$ such that $\bbE e^{|X|/s_0} \leq 2$, namely,
  $\|X\|_{\psi_1} < \infty$ in \eqref{e:sub-exp_norm}.  As a notable
  example, if $X \sim \exp(\lambda)$, $\lambda > 0$, then it is easy
  to see that $X \in L_{\psi_1}$, indeed.
\end{remark}

We formulate our second major result as follows. 
\begin{theorem}\label{t:lambda_n-1(Q)->-infty_non-symm_a.s.}
  Let $F_X \in L_{\psi_1}$ be a distribution such that $X \geq 0$
  a.s., $\bbE X = \mu > 0$ and $\Var X = \sigma^2 > 0$. We consider a
  sequence of random matrices
  $\QQ \equiv \QQ(d) = \{q_{ij}\}_{i,j=1,\hdots,d} \in \mathcal{R}(d)$, $d \in \bbN$,
  where either
  \begin{align}\label{e:RM_theorem_general_condition}
    \begin{aligned}
  &\{q_{ij}\}_{i , j =1, \hdots, d, \hspace{0.75mm}i \neq j}
    \stackrel{\textnormal{iid}}\sim F_X
    \text{ and }\\
  &- q_{ii} = \sum_{j \in \{1,\hdots,d\} \backslash \{i\}}q_{ij},
  i=1,\hdots,d
  \end{aligned}
  \end{align}
  or we impose that $\QQ \in \mathcal{R}(d) \cap \mathcal{S}(d)$ as 
    \begin{align}
    \begin{aligned}\label{e:RM_theorem_symmetry_condition}
  &\{q_{ij}\}_{i , j =1, \hdots, d, \hspace{0.75mm}i > j}
  \stackrel{\textnormal{iid}}\sim F_X,
  q_{ij} := q_{ji} \text{ for } i < j
    \text{ and }\\
  &- q_{ii} = \sum_{j \in \{1,\hdots,d\} \backslash \{i\}}q_{ij},
  i=1,\hdots,d.
  \end{aligned}
\end{align}
    Then, in either of these cases, for any $d \in \bbN \backslash\{1\}$, we have
    \begin{align}
 \mu+ O_{\as}\Big(\sqrt{ \frac{\log d}{d}}\Big) \leq& \frac{\sigma_2(\QQ)}{d} 
 \notag \\
 \leq& \frac{\sigma_d(\QQ)}{d}
 \leq \mu+ O_{\as}\Big(\sqrt{ \frac{\log d}{d}}\Big)
 \label{e:lambda_n-1(Q)->-infty_non-symm}
      \end{align}
      almost surely. In \eqref{e:lambda_n-1(Q)->-infty_non-symm},
      $O_{\textnormal{a.s.}}$ is as in
      \eqref{e:X=O_a.s.(1)_Y=o_a.s.(1)}.
\end{theorem}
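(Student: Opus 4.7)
The plan is to prove the claim by a perturbation argument around the deterministic matrix $\bbE \QQ$. Concretely, I would (i) compute the exact singular spectrum of $\bbE \QQ$, which already exhibits the claimed leading order $\mu d$; (ii) decompose the fluctuation $\QQ - \bbE \QQ$ into a mean-zero off-diagonal piece plus a mean-zero diagonal piece driven by the row sums of the former; (iii) control both pieces in operator norm by $O_{\as}(\sqrt{d\log d})$; and (iv) transfer these bounds to $\sigma_k(\QQ)$ through Weyl's inequality for singular values.

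For (i), since $\bbE q_{ij} = \mu$ for $i \neq j$ and $\bbE q_{ii} = -(d-1)\mu$, we have
\[
\bbE \QQ \;=\; \mu(\one\one^\top - \II) - (d-1)\mu\, \II \;=\; \mu\, \one\one^\top - \mu d\, \II.
\]
The rank-one matrix $\one\one^\top$ has eigenvalue $d$ along $\one$ and $0$ on $\one^\perp$, so $\bbE \QQ$ is a symmetric matrix with eigenvalue $0$ (simple, eigenvector $\one$) and $-\mu d$ with multiplicity $d-1$. Thus $\sigma_1(\bbE \QQ) = 0$ and $\sigma_k(\bbE \QQ) = \mu d$ for every $k = 2,\ldots, d$, which already matches the target to leading order.

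For (ii)--(iii), I would write $\QQ - \bbE \QQ = \tilde{\mathbf A} + \tilde{\mathbf D}$, where $\tilde{\mathbf A}$ has entries $q_{ij}-\mu$ off the diagonal and zero on it, and $\tilde{\mathbf D}$ is diagonal with $\tilde{\mathbf D}_{ii} = -\sum_{j\neq i}(q_{ij}-\mu)$. Each $\tilde{\mathbf D}_{ii}$ is a sum of $d-1$ centered iid sub-exponential variables, so Bernstein's inequality (see, e.g., \cite[Cor.~2.8.3]{vershynin:2018}) gives $\bbP(|\tilde{\mathbf D}_{ii}| > C\sqrt{d\log d}) \leq d^{-\beta}$ for any prescribed $\beta > 1$ with $C = C(\beta, \|X\|_{\psi_1})$. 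A union bound over $i \leq d$ together with Borel--Cantelli then yields $\|\tilde{\mathbf D}\|_{\op} = O_{\as}(\sqrt{d\log d})$. For $\tilde{\mathbf A}$, which is either a fully iid centered sub-exponential matrix under \eqref{e:RM_theorem_general_condition} or a Wigner-type centered symmetric sub-exponential matrix under \eqref{e:RM_theorem_symmetry_condition} (in each case with its diagonal removed---a rank-$d$ modification already controlled as above), I would apply a standard $\varepsilon$-net/Bernstein bound in the style of \cite{vershynin:2018} to obtain $\|\tilde{\mathbf A}\|_{\op} = O_{\as}(\sqrt{d\log d})$.

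Combining via the triangle inequality, $\|\QQ - \bbE \QQ\|_{\op} = O_{\as}(\sqrt{d\log d})$. Weyl's perturbation inequality for singular values, $|\sigma_k(\QQ) - \sigma_k(\bbE \QQ)| \leq \|\QQ - \bbE \QQ\|_{\op}$, then immediately delivers $\sigma_k(\QQ)/d = \mu + O_{\as}(\sqrt{\log d /d})$ for every $k \in \{2,\ldots,d\}$, which is \eqref{e:lambda_n-1(Q)->-infty_non-symm}. The main obstacle is securing the \emph{almost sure} (as opposed to in-probability) operator-norm control on $\tilde{\mathbf A}$ under only the $L_{\psi_1}$ assumption: sub-exponential tails are heavier than sub-Gaussian, so the clean Bai--Yin $O(\sqrt{d})$ rate is not directly available, but the logarithmic slack in the stated bound absorbs the resulting $\sqrt{\log d}$ loss, and the upgrade from in-probability to almost-sure convergence is then a routine Borel--Cantelli step once the tail probabilities at scale $C\sqrt{d\log d}$ decay faster than $d^{-1}$.
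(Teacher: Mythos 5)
Your overall architecture coincides with the paper's, just with different bookkeeping: your $\bbE\QQ=\mu\one\one^\top-\mu d\,\II$ is the paper's $(1-d)\QQ_4$, your diagonal fluctuation $\tilde{\mathbf D}$ is $(1-d)\QQ_3$, your $\tilde{\mathbf A}$ is $\QQ_1+\QQ_2$ (the paper fills in an auxiliary iid diagonal and then subtracts it back off), and your Bernstein--union-bound--Borel--Cantelli treatment of $\tilde{\mathbf D}$ is exactly the paper's \cref{l:||Q_4||=o_P(1)}. Your use of Weyl's perturbation inequality for singular values, $|\sigma_k(\QQ)-\sigma_k(\bbE\QQ)|\le\|\QQ-\bbE\QQ\|_{\op}$, is in fact cleaner than the paper's route through Weyl's inequalities for the eigenvalues of $\QQ\QQ^*$.

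The genuine gap is in your control of $\|\tilde{\mathbf A}\|_{\op}$. The $\varepsilon$-net/Bernstein argument in the style of Vershynin works for sub-Gaussian entries but fails for sub-exponential ones at the scale you need: for a peaky direction pair such as $u=v=e_1$ one has $\max_{ij}|u_iv_j|=1$, so Bernstein's inequality for $u^\top\tilde{\mathbf A}v$ sits in its linear regime and yields a tail only of order $\exp(-c\sqrt{d\log d}/K)$ at level $t=C\sqrt{d\log d}$, which cannot absorb the $e^{Cd}$ cardinality of a net on $S^{d-1}\times S^{d-1}$. Truncating entries at level $C\log d$ and rerunning a bounded-Bernstein bound does not rescue the computation either, since the variance proxy then carries an $Mt$ term of the same problematic order. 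One needs a genuinely different device (splitting the net into flat and peaky vectors, Seginer/Lata{\l}a-type row--column bounds, or the moment method) --- or simply the observation, which your closing paragraph gets backwards, that the Bai--Yin $O_{\as}(\sqrt d)$ rate \emph{is} directly available here: it requires only finite fourth moments, which $L_{\psi_1}$ guarantees, and this is precisely what the paper invokes (Yin--Bai--Krishnaiah for the non-symmetric case viewed through the sample covariance matrix $d^{-1}\QQ_1\QQ_1^*$, and Bai's Wigner-matrix theorem for the symmetric case). With that substitution, and noting as the paper does that the union bound for the diagonal piece never uses independence of the rows (so the symmetric case costs nothing extra), the rest of your argument goes through.
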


\begin{remark}
  The bounds constructed in Theorem
  \ref{t:lambda_n-1(Q)->-infty_non-symm_a.s.} are strongly reminiscent
  of the bounds for eigenvalues provided in Theorem 1.5 of the seminal
  paper \cite{bordenave:caputo:chafai:2014} (see also, for instance,
  Corollary 1.6 in \cite{bryc:dembo:jiang:2006} and Corollary 1.1 in
  \cite{ding:jiang:2010}).
\end{remark}

Finally, let us observe that \cref{thm:bride:of:frankinstein},
\cref{t:lambda_n-1(Q)->-infty_non-symm_a.s.} as well as the fact
that
\begin{align}
  \sigma_j(\QQ) = |\lambda_{d - j +1}(\QQ)|,
  \text{ for e.g. any }
  \QQ \in \mathcal{S}(d) \cap \mathcal{R}(d)
  \label{eq:eigen:to:SV:SYM:dummy}
\end{align}
combine to produce the following immediate corollary.
\begin{Corollary}\label{cor:rig:det:rand:wrapup}
  Consider any sequence of random matrices
  $\QQ \equiv \QQ(d) = \{q_{ij}\}_{i,j=1,\hdots,d} \in
  \mathcal{R}(d)$, $d \in \bbN$, as in
  \cref{t:lambda_n-1(Q)->-infty_non-symm_a.s.} under the second
  (symmetric) case \eqref{e:RM_theorem_symmetry_condition}.  Then,
  taking $\mu_1 = \mu_1(d)$ and $\mu_2= \mu_2(d)$ as the resulting
  lower and upper bounds defined by
  \eqref{e:lambda_n-1(Q)->-infty_non-symm}, we have that $\QQ$
  satisfies both \eqref{eq:bride:of:frankinstein:1:rm:app},
  \eqref{eq:bride:of:frankinstein:0:rm:app}, cf.
  \eqref{eq:eigen:to:SV:SYM:dummy} relative to this sequence of
  $\mu_1, \mu_2$ for any $\JJ \in \mathcal{M}(d)$.
\end{Corollary}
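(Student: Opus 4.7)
\medskip

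\noindent\textbf{Proof plan for \cref{cor:rig:det:rand:wrapup}.}
The plan is to verify the hypotheses of the second (``furthermore'') part of \cref{thm:bride:of:frankinstein}, so that its bounds \eqref{eq:bride:of:frankinstein:1:rm:app} and \eqref{eq:bride:of:frankinstein:0:rm:app} apply with the specific choice of $\mu_1(d)$, $\mu_2(d)$ supplied by \cref{t:lambda_n-1(Q)->-infty_non-symm_a.s.}. Because $\QQ \in \mathcal{R}(d)\cap\mathcal{S}(d)$, essentially everything follows by bookkeeping once the singular-value bounds are translated into eigenvalue bounds via \eqref{eq:eigen:to:SV:SYM:dummy}.

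First I would record the structural facts. Since $\QQ \in \mathcal{R}(d)$ satisfies $\QQ\one=\Zero$, $0$ is an eigenvalue of $\QQ$, and by the Gershgorin circle theorem combined with the symmetry assumption, all eigenvalues of $\QQ$ are real and non-positive; hence $\lambda_d(\QQ)=0$, matching the first hypothesis of the ``furthermore'' clause. Symmetry also gives the identity $\sigma_j(\QQ)=|\lambda_{d-j+1}(\QQ)|$ from \eqref{eq:eigen:to:SV:SYM:dummy}. In particular $\sigma_1(\QQ)=|\lambda_d(\QQ)|=0$, $\sigma_2(\QQ)=|\lambda_{d-1}(\QQ)|$, and $\sigma_d(\QQ)=|\lambda_1(\QQ)|$.

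Next, I would apply \cref{t:lambda_n-1(Q)->-infty_non-symm_a.s.} under its symmetric case \eqref{e:RM_theorem_symmetry_condition}. The two-sided bound \eqref{e:lambda_n-1(Q)->-infty_non-symm} yields
\begin{align*}
\mu + O_{\as}\!\left(\sqrt{\tfrac{\log d}{d}}\right)
 \leq \frac{|\lambda_{d-1}(\QQ)|}{d}
 \leq \frac{|\lambda_{1}(\QQ)|}{d}
 \leq \mu + O_{\as}\!\left(\sqrt{\tfrac{\log d}{d}}\right).
\end{align*}
Defining $\mu_1(d)$ and $\mu_2(d)$ as precisely the (random) lower and upper bounds appearing here, the assumption $\mu>0$ guarantees that, almost surely for all sufficiently large $d$, one has $0<\mu_1(d)\leq \mu_2(d)$ together with the sandwich $\mu_1(d)\, d \leq |\lambda_{d-1}(\QQ)|\leq |\lambda_1(\QQ)|\leq \mu_2(d)\, d$. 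This is exactly the scaling hypothesis imposed in the second half of \cref{thm:bride:of:frankinstein}.

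With these hypotheses in place, I would then simply invoke \cref{thm:bride:of:frankinstein}: the bounds \eqref{eq:bride:of:frankinstein:1:rm:app} and \eqref{eq:bride:of:frankinstein:0:rm:app} hold for $\QQ$ and any $\JJ\in\mathcal{M}(d)$ with this choice of $\mu_1(d)$, $\mu_2(d)$. There is no substantive obstacle; the only subtle point is the almost-sure caveat, namely that $\mu_1(d)>0$ is guaranteed only for $d$ large enough along each sample sequence. Since the final statement is an almost-sure one (inherited from $O_{\as}$), this is automatically absorbed into the conclusion, and the corollary is proved.
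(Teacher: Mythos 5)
Your proposal is correct and follows essentially the same route the paper intends: the paper presents this as an immediate consequence of combining \cref{thm:bride:of:frankinstein}, \cref{t:lambda_n-1(Q)->-infty_non-symm_a.s.} and the identity \eqref{eq:eigen:to:SV:SYM:dummy}, which is precisely the bookkeeping you carry out (checking $\lambda_d(\QQ)=0$, translating the singular-value sandwich into the eigenvalue hypothesis $\mu_1 d \leq |\lambda_{d-1}(\QQ)| \leq |\lambda_1(\QQ)| \leq \mu_2 d$, and invoking the ``furthermore'' clause). Your explicit remark about the almost-sure nature of $\mu_1(d)>0$ for large $d$ is a useful clarification the paper leaves implicit.
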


\begin{figure*}[!t]
    \centering
    \includegraphics[width=\linewidth]{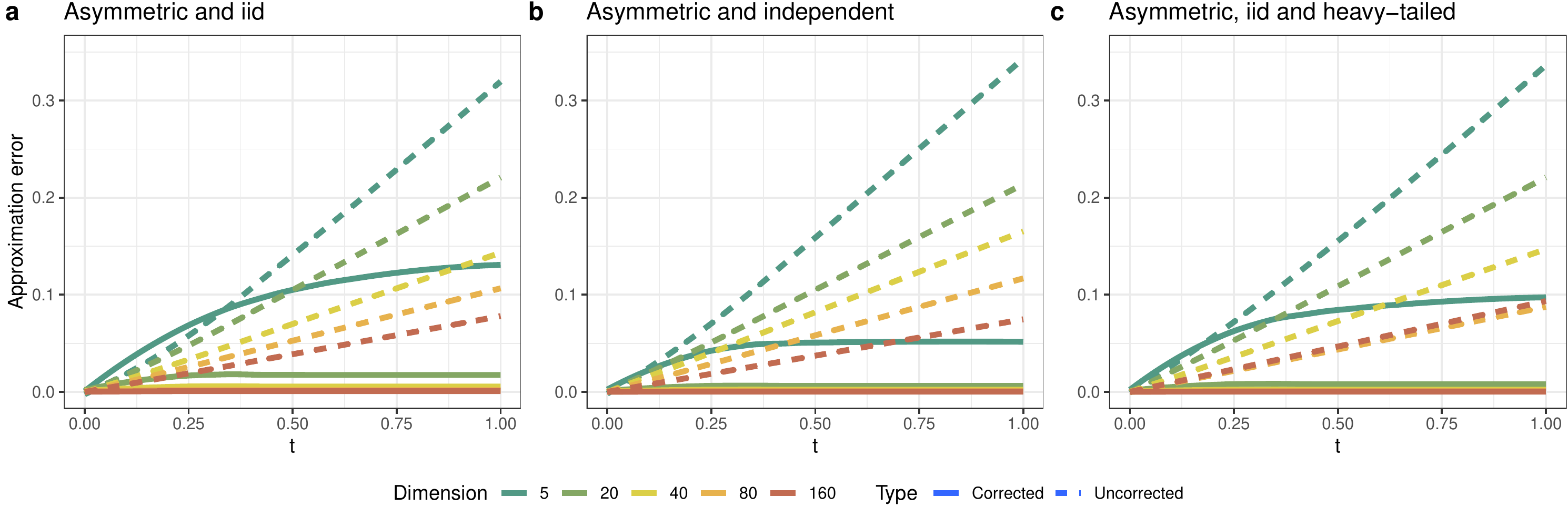}
    \caption{Frobenius norm errors obtained by first-order approximation $te^{t\QQ}\JJ$ and by affine-corrected first-order approximation
      $te^{t\QQ}\JJ - t(\II -\QQ^+ \QQ) \JJ \QQ \QQ^{+}$
      (\cref{thm:bride:of:frankinstein}) under increasingly relaxed
      assumptions.  Within each assumption set, we average over 20
      independent Monte Carlo simulations of random generator matrices
      for each dimension. Plot \textbf{a} corresponds to asymmetric
      generators with off-diagonal elements having independent and
      identically distributed (iid) standard exponential random
      variables that correspond to the sub-exponential distribution
      hypothesis. Plot \textbf{b} drops the identical distribution
      assumption by allowing each row and column of the generator
      matrix to additively contribute its own mean---itself given by a
      standard exponential---to its corresponding exponentially
      distributed entries. Plot \textbf{c} features rate
      matrices with iid Cauchy entries truncated to be
      positive. Empirically, the results of
      \cref{cor:rig:det:rand:wrapup} extend beyond the symmetric,
      iid and sub-exponential hypotheses, suggesting scope of future work.}
    \label{fig:app}
\end{figure*}

\begin{remark}\label{rmk:non:sym:num:ev}
  Our rigorous formulation of \cref{cor:rig:det:rand:wrapup} is
  limited to symmetric random rate matrices whose above diagonal
  elements are independent and identically distributed draws from a
  sub-exponential distribution. However, strong numerical evidence
  suggest that the scope of the approximations
  \eqref{eq:bride:of:frankinstein:1:rm:app},
  \eqref{eq:bride:of:frankinstein:0:rm:app} reach far beyond the
  limitations of \cref{cor:rig:det:rand:wrapup} in several different
  ways. \cref{fig:app} explores the
  consequences of relaxing various assumptions of
  \cref{cor:rig:det:rand:wrapup}.  The third plot involves folded Cauchy random variables, the heavy tails of which violate the sub-exponential assumption (\cref{def:sub:exp:RV}).  There appears to be no
  significant departure from the idea that the form
  $t(\II -\QQ^+ \QQ) \JJ \QQ \QQ^{+}$ corresponds increasingly well to
  the true approximation error (\eqref{eq:true:diff}) as the dimension
  increases.
\end{remark}

\begin{remark}\label{rmk:better:approx}
  Calculation of
  $\widetilde{\nabla}_{\JJ} e^{t\QQ} := t e^{t\QQ} \JJ$ requires
  $\order{d^3}$ operations by, e.g., computing the spectral
  decomposition $\QQ$ as in \eqref{eq:spec:decomp}.  One may then recycle this
  decomposition to determine the
  additional term $t(\II -\QQ^+ \QQ) \JJ \QQ \QQ^{+}$ for little extra cost.  In view of
  \cref{cor:rig:det:rand:wrapup} and \cref{fig:app},
  \begin{align}\label{eq:better:help:on:sale}
    \widehat{\nabla}_{\JJ} e^{t\QQ} := t e^{t\QQ} \JJ
    -t(\II -\QQ^+ \QQ) \JJ \QQ \QQ^{+}
  \end{align}
  provides an accurate approximation of $\nabla_{\JJ} e^{t\QQ}$ for
  asymptotically for large $d$. Thus, we anticipate further computational improvements when fitting
  large, gold-standard models using this
  refined approximation.  That said, we leave the efficient and scalable application of $\widehat{\nabla}_{\JJ} e^{t\QQ}$ to future work.
\end{remark}

\begin{figure}[!t]
	\centering
	\includegraphics[width=\linewidth]{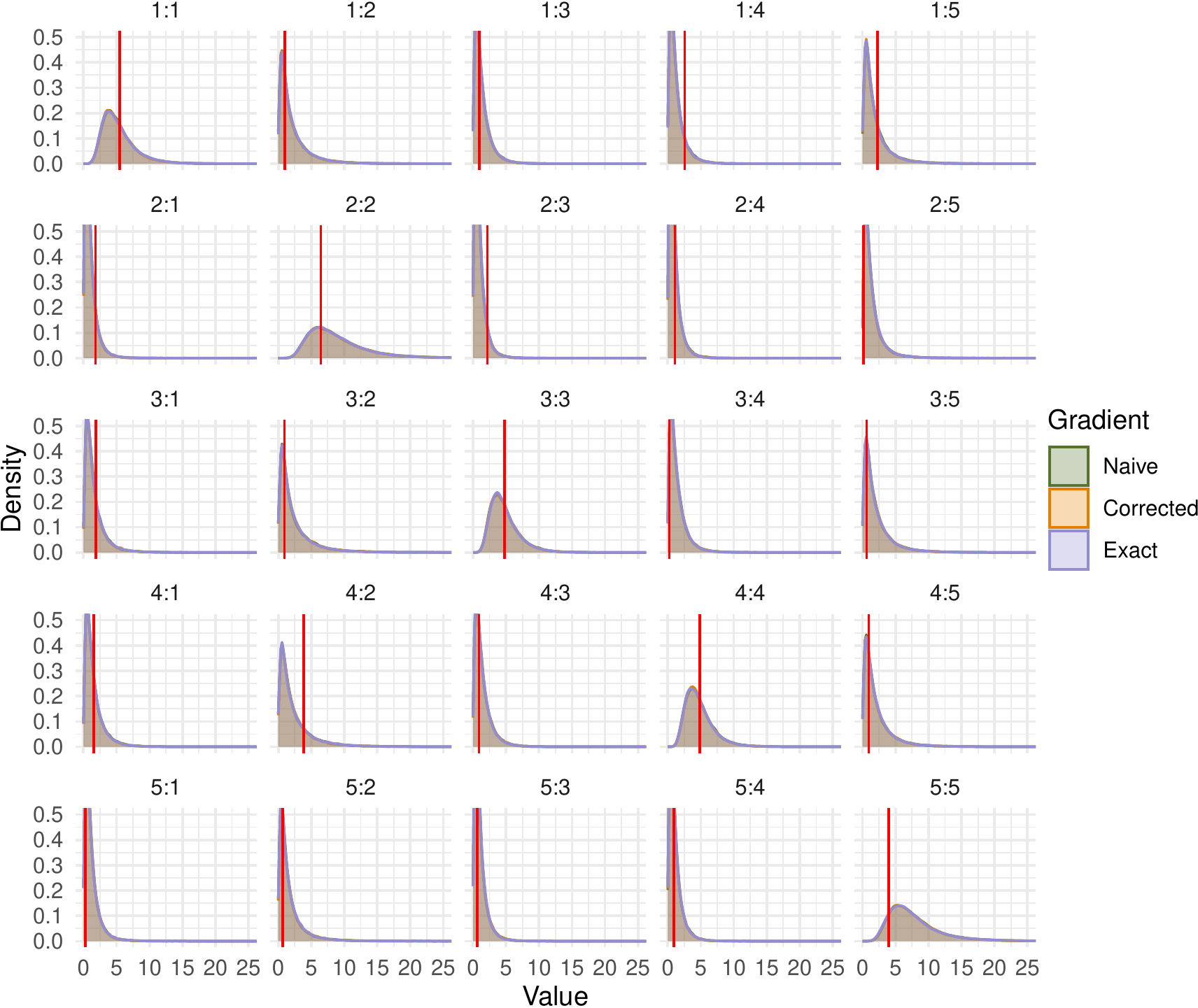}
	\vspace{-2em}
	\caption{Posterior density plots for surrogate-trajectory Hamiltonian Monte Carlo (HMC) (\cref{sec:Sur:trad:HMC}) using the naive approximate derivative, the corrected approximation \eqref{eq:better:help:on:sale} and the exact matrix exponential derivative for elements of a $5\times 5$ generator matrix. The near perfect overlap reflects the fact that each algorithm's transition kernel leaves the posterior distribution invariant \cite{glatt2020accept}.  To generate data, we randomly draw standard normal generator entries once and simulate 20 independent initial/final position pairs from a CTMC with time interval $t=1$.  We show the true value in red and negate diagonal elements to simplify presentation.  }\label{fig:lowDim}
\end{figure}

\section{Empirical Studies}\label{sec:emp}

\begin{figure*}[!t]
	\centering
	\includegraphics[width=\linewidth]{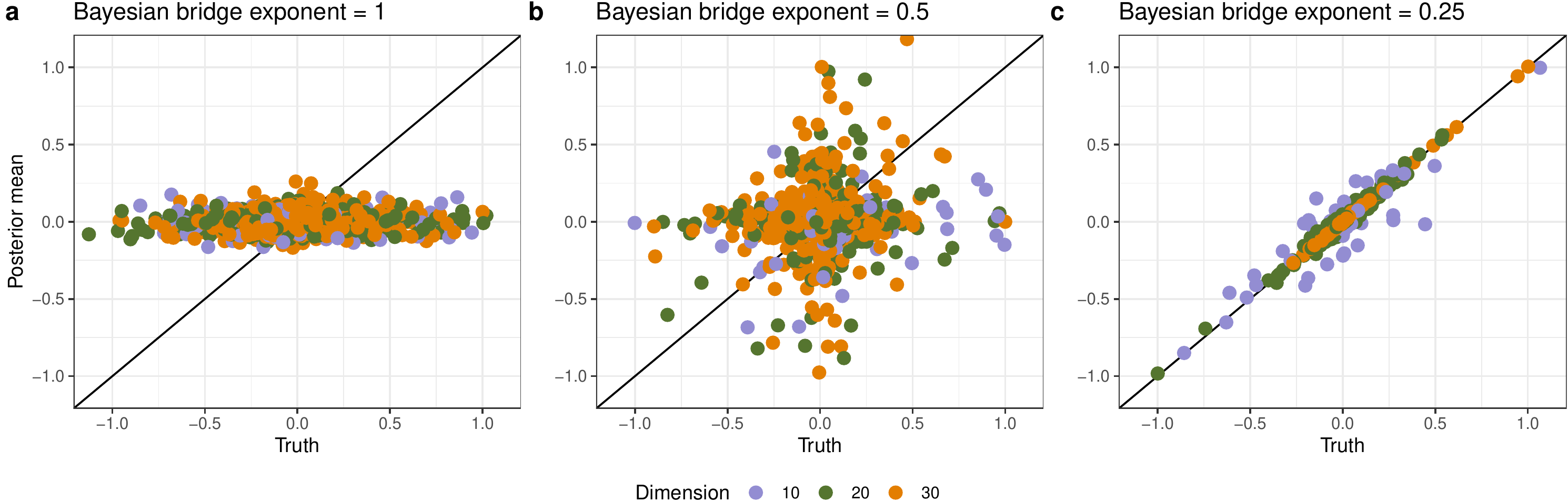}
	\caption{Posterior means versus truth for CTMC generator matrix elements within differing sparsity regimes and with different dimensionalities $d$, holding observation count fixed at 300. We generate posteriors using surrogate-trajectory Hamiltonian Monte Carlo with the naive matrix exponential derivative.  To affect sparsity levels, we generate generator matrix entries according to the Bayesian bridge distribution \cite{polson2014bayesian} with different exponents ($\alpha \in \{1,1/2,1/4\}$), normalizing by the largest absolute values to ease comparison. Smaller $\alpha$ values encode more peaked distributions with heavier tails and thus enforce greater sparsity.   Plot \textbf{c} reflects the fact that the Bayesian bridge prior with exponent $\alpha=1/4$ helps identify non-null parameters in small sample contexts \cite{magee2023random}.  With this intuition in mind, we specify such a prior on generator random effects in \cref{sec:sars:cov:app}.}
	\label{fig:truthMean}
\end{figure*}

Before applying the naive matrix exponential derivative approximation to the phylogeographic analysis of SARS-CoV-2, we carry out a few targeted studies that illustrate: the empirical performance of the approximate derivative and its affine correction \eqref{eq:better:help:on:sale} (\cref{fig:app}); agreement between CTMC generator matrix empirical posterior distributions generated by surrogate-trajectory Hamiltonian Monte Carlo (HMC) algorithms using approximate derivatives and the truth (\cref{fig:lowDim}); and point estimation of generator matrix element values under different sparsity regimes and different CTMC state space dimensionalities $d$ (\cref{fig:truthMean}). 

Here, we fill in remaining simulation details not included in figure captions. In the simulations contributing to \cref{fig:app}, we randomly generate new, independent direction matrices $\JJ$ at each time step within each of the 20 independent runs. The results are not sensitive to $\JJ$ in general.  We also note that under \cref{def:sub:exp:RV} the Cauchy distribution is \emph{not} sub-exponential and therefore represents a deviation from core assumptions of \cref{sec:rig:results}.  Using varying distributions on the elements of generator matrices $\QQ$, the simulations contributing to \cref{fig:lowDim} and \cref{fig:truthMean} both randomly generate $N$ independent initial states $\y^0_1,\dots,\y^0_N$ according to uniform distributions over their respective CTMC state spaces. For each of these initial states $\y_0$, we then simulate from the CTMC for time interval $t=1$ to obtain samples $\y^1_1,\dots,\y^1_N$.  The likelihood then takes the form $\prod_n (\y^{0}_n)^T e^{\QQ} \y^{1}_n$.  For the simulation contributing to \cref{fig:lowDim}, we specify standard normal priors and sample from the posterior by generating 100,000 iterations from each algorithm.  For the simulation contributing to \cref{fig:truthMean}, we generate 500,000 MCMC samples using surrogate-trajectory HMC with the naive matrix exponential derivative and calculate the posterior mean of each generator matrix element using the final 100,000 samples.

\section{Application: Global Spread of SARS-CoV-2}
\label{sec:sars:cov:app}

\cite{magee2023random} consider phylogenetic models that involve CTMC
priors and show that the first-order approximation
\eqref{eq:firstOrder} of the matrix exponential derivative
\eqref{eq:grad:def} performs remarkably well within surrogate HMC (see
\cref{sec:Sur:trad:HMC} in the Supplement), achieving an over 30-fold
efficiency gain compared to random-walk Metropolis for a 14-state
model with over 180 model parameters.  Here, we demonstrate similar
strong performance of the first-order approximation within surrogate
HMC for a 44-state model with almost 1,900 model parameters.

\subsection*{Phylogenetic CTMC}

Start with a (possibly unknown) rooted and bifurcating phylogenetic
tree $\phylogeny$ consisting of $N$ leaf nodes that correspond to
observed biological specimens and $N-1$ internal nodes that correspond
to unobserved ancestors.  The tree also contains $2N-2$ branches of
length $t_v$ connecting each child node $v$ to its parent $u$.

Given $\phylogeny$, we model the evolution of $d$ characters along
each branch of the tree according to a CTMC model with $d\times d$
generator matrix $\QQ$ and stationary distribution
$\ppi=(\pi_1,\dots,\pi_d)=\lim_{t\rightarrow
  \infty}\widetilde{\ppi}e^{t\QQ}$, for $\widetilde{\ppi}$ any
arbitrary probability vector.  Examples of characters are the $d=4$
nucleotides within a set of aligned genome sequences
\cite{jukes1969evolution} and the set of $d=15$ geographic regions
visited by an influenza subtype \cite{lemey2014unifying}.  We may
scale $t_v$ to be raw time (e.g., years) or the expected number of
substitutions with respect to $\ppi$ depending on the given problem.
In the former case, one may augment the model with a rate scalar
$\gamma$ that modulates the expected number of substitutions across
all branches, and the finite-time transition probability matrix along
branch $v$ becomes $\P_v:=e^{\gamma t_v \QQ}$.  In the following, we
further posit $\QQ=\QQ(\ttheta)$ for $\ttheta$ a vector of parameters.
 
Let data $\Y$ be the $d\times N$ matrix with columns $\y_n$, $n \in \{1,\dots,N\}$, each
having a single non-zero entry (set to 1) corresponding to the
observed state of the biological specimen.  One may use
any node $v$ to express the likelihood
 \begin{align}
p(\Y|\ttheta) = \p_v^T \q_v \, ,
 \end{align}
 where $\p_v$ and $\q_v$ are the post-order and pre-order partial
 likelihood vectors, respectively \cite{ji2020gradients}.  The former describes the
 probability of the observed states for all observed specimens (i.e.,
 leaf nodes) that descend from node $v$, conditioned on the state at
 node $v$. The latter describes analogous probabilities for all
 observed specimens not descending from node $v$. For leaf nodes,
 $\p_n:=\y_n$, $n \in \{1,\dots,N\}$, and one may specify the root node's pre-order partial likelihood to be any arbitrary probability vector \emph{a priori}. Let `$\circ$' denote the Hadamard or elementwise product between matrices or vectors of equal dimensions.  If we suppose that node $u$ gives rise to two child
 nodes, $v$ and $w$, then
\begin{align}\label{eq:recursions}
  \p_u = \P_v \p_v \circ \P_w \p_w\, ,
  \quad \q_v= \P_v^T \left( \q_u \circ  \P_w \p_w\right) \, .
\end{align}
Using the chain rule and the fact that $\p_v$ does not depend on
$\P_v$, one may write the likelihood's derivative with respect to a
single parameter $\theta_k$ thus:
\begin{align}\label{eq:phyloGrad}
	\frac{\partial}{\partial \theta_k}  p(\Y|\ttheta) 
  &\propto   \sum_{v=1}^{2N-2}
    \mbox{tr} \left( \frac{\partial (\p_v^T\q_v)}{\partial \P_v}
    \frac{\partial \P^T_v}{\partial \theta_k} \right) \\ \notag
  &= \sum_{v=1}^{2N-2} \mbox{tr} \left(  (\q_u\circ \P_w\p_w)
    \p_v^T\bigg( \frac{\partial e^{\gamma t_v \QQ}}{\partial \theta_k}\bigg)^{\!T}\right)  \\ \nonumber
  &=\sum_{v=1}^{2N-2} \p_v^T
    \Biggl(\sum_{i,j=1}^d  \frac{\partial e^{\gamma t_v \QQ}}{\partial q_{ij}} \frac{\partial q_{ij}}{\partial \theta_k} \Biggr)^{\!T}
    \hspace{-0.3em}(\q_u\circ \P_w\p_w)  \, ,
\end{align}
where we suppress the dependence of $u$ and $w$ on $v$.

Whereas the recursions of \eqref{eq:recursions} facilitate fast
likelihood computation, inferring $\ttheta$ using gradient-based
approaches such as HMC requires a large number of repeated evaluations
of the matrix exponential derivative.  These computations become
particularly onerous when one opts for a gold-standard mixed-effects
model \cite{magee2023random} and specifies
\begin{align}\label{eq:mixed}
\log q_{ij} = b_{ij} + \epsilon_{ij} \,, \quad i\neq j\, .
\end{align} 
Here, the fixed effects $b_{ij}$ are elements of some non-random
matrix $\B$, and the random effects $\epsilon_{ij}$ are mutually
independent \emph{a priori} and inferred as model parameters.  The
dimension $K$ of $\ttheta$ in this model is $\mathcal{O}(d^2)$, so the $\mathcal{O}(KNd^5)$ cost of the log-likelihood derivative \eqref{eq:phyloGrad}
becomes a massive $\mathcal{O}(Nd^7)$.  In this context,
\cite{magee2023random} show that an approximate log-likelihood derivative based on the first-order approximation to the matrix exponential helps achieve a considerable speedup over the exact
derivative, requiring only $\mathcal{O}(d^4 + Nd^3)$ floating-point operations yet facilitating high-quality proposals in the context
of surrogate HMC.

In the following section, we use this method to analyze the global
spread of SARS-CoV-2 and show that the first-order approximation
maintains its performance for an even higher-dimensional problem than
previously considered.

\subsection*{Bayesian analysis of SARS-CoV-2 contagion}

\begin{figure*}[!t]
	\centering
	\includegraphics[width=\linewidth]{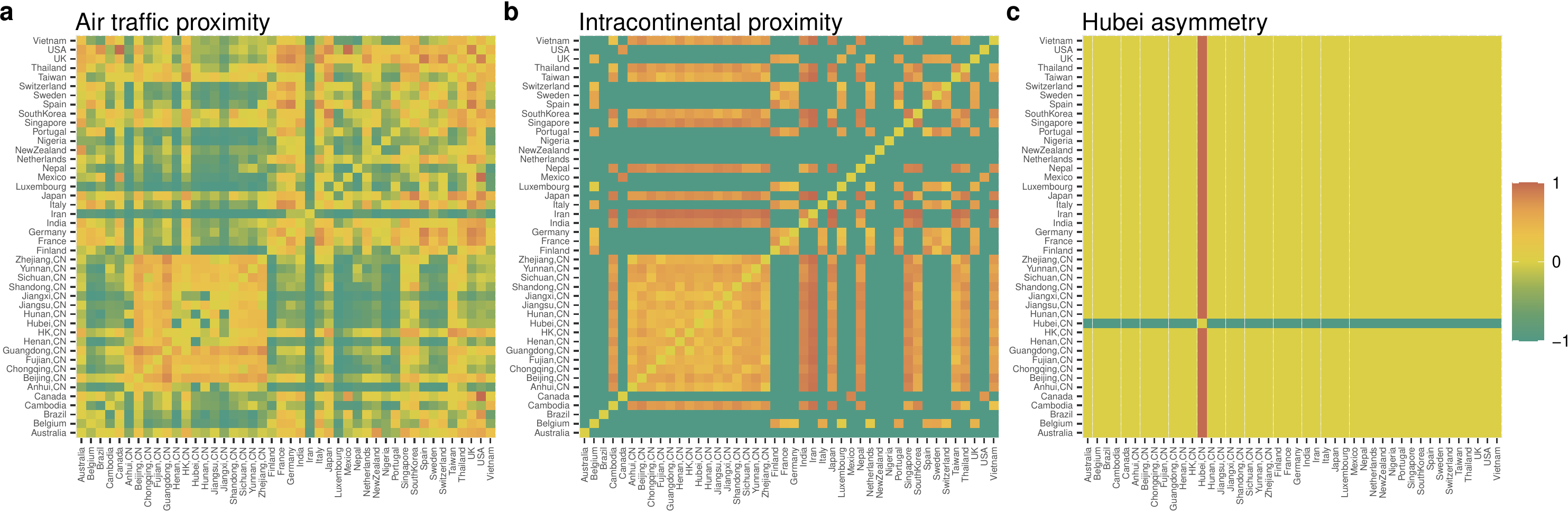}
	\vspace{-2em}
	\caption{Matrix predictors described in \eqref{eq:linear} combine in a linear manner to form the fixed-effect matrix $\B$ featured within the mixed-effects regression model \eqref{eq:mixed}.  Air traffic proximities \emph{(\textbf{a})} are proportional the number of air passengers exchanged between airports within respective regions \cite{holbrook2021massive}.  Intracontinental proximities \emph{(\textbf{b})} take values between -1 for regions on different continents to 1 for adjacent regions.  The Hubei asymmetry \emph{(\textbf{c})} roughly characterizes the Hubei quarantine of early 2020.}\label{fig:FE}
\end{figure*}

We use the phylogenetic CTMC framework to model the early spread of
SARS-CoV-2---the virus responsible for the ongoing COVID-19
pandemic---based on $N=285$ observed viral samples collected from $31$
regions worldwide between December 24, 2019 and March 19, 2020. These
regions comprise 13 provinces within China and 18 countries without.
Understanding the manner in which viruses travel between human
populations is an object of ongoing study, and phylogeographic
analyses point to the central role of travel networks including those
measured by airline passenger counts \cite{holbrook2021massive} or
google mobility data \cite{worobey2020emergence}.  Here, we include
three such predictors of travel in our CTMC model by expanding the
fixed effects in regression model \eqref{eq:mixed} to take the form
\begin{align}\label{eq:linear}
	\B = \theta_1 \X_1 + \theta_2  \X_2 + \theta_3 \X_3 
\end{align} 
for $\X_1$, $\X_2$ and $\X_3$ fixed $44\times 44$ matrix predictors and $\theta_1$, $\theta_2$ and $\theta_3$
 real-valued regression coefficients. Note we have expanded the
number of regions between which viruses may travel to $d=44$ by
including two additional Chinese provinces and 11 additional
countries.  Figure \ref{fig:FE} presents the three predictors of
interest: $\X_1$ contains air travel proximities between locations as
measured by annualized air passenger counts between airports contained
within a region \cite{holbrook2021massive}; $\X_2$ contains
intracontinental proximities arising from physical distances when two
regions inhabit the same continent and fixed at the minimum otherwise;
finally, $\X_3$ describes the Hubei asymmetry, i.e., the non-existence
of human travel out of the Hubei province in early 2020.

\begin{figure*}[!t]
	\centering
	\includegraphics[width=\linewidth]{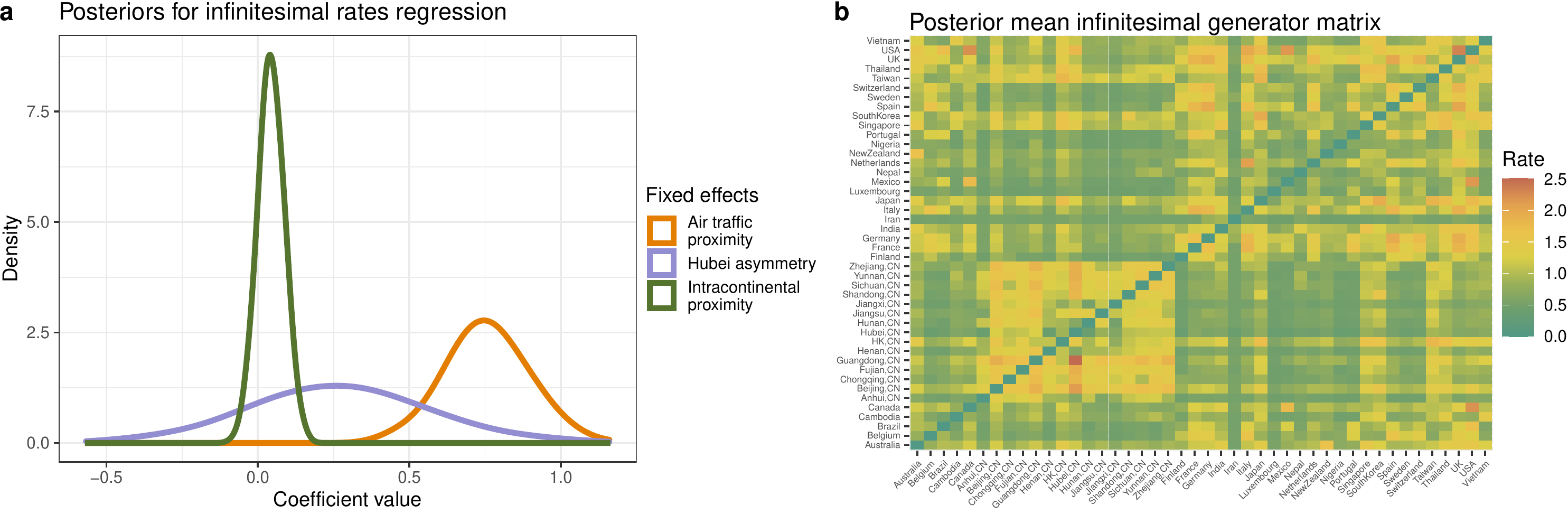}
	\vspace{-2em}
	\caption{Posterior inference. Posterior densities
          \emph{(\textbf{a})} for the three fixed-effect regression
          coefficients corresponding to the predictor matrices of
          Figure \ref{fig:FE} reflect largely positive associations
          between predictors and infinitesimal rate matrix, although
          air traffic proximity has the only statistically significant
          coefficient with posterior mean of 0.76 and 95\% credible
          interval of (0.50, 1.03).  The posterior mean
          \emph{(\textbf{b})} infinitesimal rate matrix closely
          resembles the air traffic predictor, while reflecting the
          Hubei asymmetry to a lesser extent.}\label{fig:post}
\end{figure*}

In the context of a Bayesian analysis, we specify independent normal
priors on $\theta_1$, $\theta_2$ and $\theta_3$ with means of 0 and
variances of 2.  We also assume that the 1,892 random effects
$\epsilon_{ij}$ follow sparsity inducing Bayesian bridge priors with
global scale parameter $\tau$ and exponent $\alpha=0.25$.  Here, we
follow \cite{nishimura2022shrinkage} and specify a Gamma prior on
$\tau^{-\alpha}$ with a shape parameter of 1 and a rate parameter of
2. Finally, we place a flat prior on the rate scalar $\gamma$.
Inferring the posterior distribution of all 1,897 model parameters
requires an advanced MCMC strategy. Namely, we adopt an
HMC-within-Gibbs approach, updating the scalars $\gamma$ and $\tau$
independently but updating all 1,895 regression parameters (both fixed
and random effects) using surrogate HMC accomplished with the
first-order approximation
\begin{align*}
  \gamma t_v  e^{\gamma t_v \QQ} \JJ_{ij}
  \approx \frac{\partial e^{\gamma t_v \QQ}}{\partial q_{ij}} 
\end{align*}
within \eqref{eq:phyloGrad}.

We generate 8 million MCMC samples in this manner, saving 1 in 10,000
Markov chain states, in order to guarantee a minimum effective sample
size (ESS) greater than 100.  By far, the worst-mixing parameter is
the global scale $\tau$, which obtains an ESS of 185.  The three
fixed-effects regression coefficients $\theta_1$, $\theta_2$ and
$\theta_3$ obtain ESS's of 721, 721 and 644, respectively.  The median
and minimum ESS for the 1,892 random effects $\epsilon_{ss'}$ are 721
and 190, respectively.  We note that, after thinning and removing
burn-in, the sample only consists of 721 states, so ESS of 721 implies
negligible autocorrelation between samples.

The left plot of Figure \ref{fig:post} presents posterior densities
for the fixed-effect coefficients from regression \eqref{eq:mixed}.
The air traffic proximity, intracontinental proximity and Hubei
asymmetry coefficients have posterior means and 95\% credible
intervals of 0.76 (0.50, 1.03), 0.04 (-0.04, 0.12) and 0.27 (-0.39,
0.78), respectively. While these results suggest a positive
association between each predictor matrix and the infinitesimal
generator matrix $\QQ$, the only predictor with a statistically
significant association is air traffic proximity.  This result agrees
with a previous phylogeographic analysis of the global spread of
influenza \cite{holbrook2021massive}.  The right plot of Figure
\ref{fig:post} presents the posterior mean for generator $\QQ$.  As
one may expect, the matrix looks similar to that of the air traffic
predictor matrix in Figure \ref{fig:FE}, but one may also see the
influence of the Hubei asymmetry in, e.g., the squares corresponding
to travel between Guangdong and Hubei provinces.  Finally, we randomly
generate regions of unobserved ancestors from their posterior
predictive distributions every 100-thousandth MCMC iteration.
After collapsing regions into 8 major blocks, Figure \ref{fig:tree} projects the empirical posterior predictive mode of these blocks onto the phylogenetic tree $\phylogeny$.  The general pattern looks similar to that of Figure 1 from \cite{lemey2020accommodating}, although the geographic blocking scheme differs slightly.

\begin{figure*}[!t]
	\centering
	\includegraphics[width=0.7\textwidth]{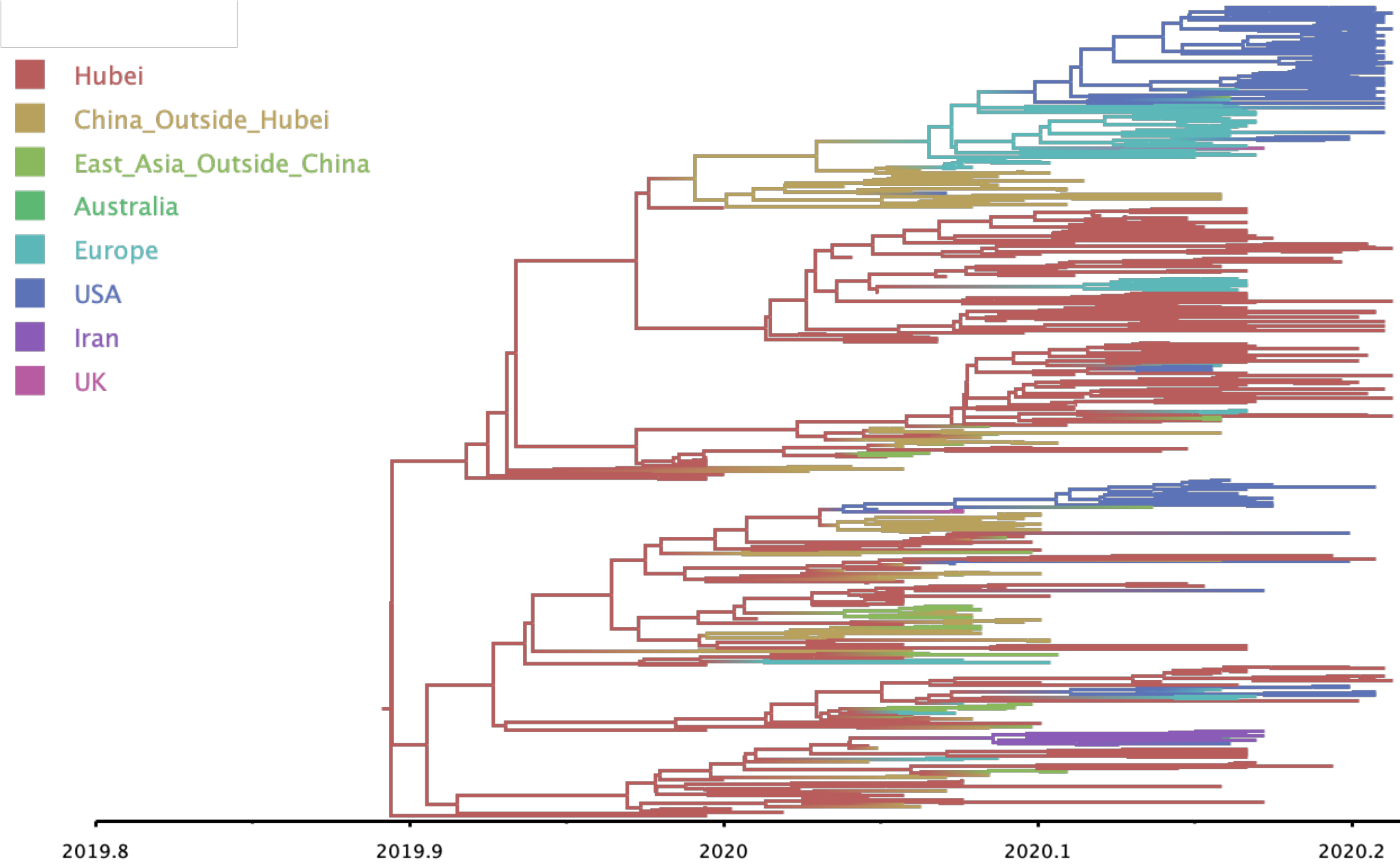}
	\vspace{-0.5em}
	\caption{Posterior predictive modes for (unobserved) ancestral
          locations color a phylogenetic tree that describes the
          shared evolutionary history of 285 SARS-CoV-2
          samples.}\label{fig:tree}
\end{figure*}

In addition to these scientific questions of interest, we are
interested in the performance of the first-order approximation as a
surrogate gradient for HMC in such a high-dimensional setting. Whereas we know that the surrogate-trajectory HMC transition kernel leaves its target distribution invariant regardless of the approximation quality \cite{glatt2020accept}, transitions that rely on poor gradient approximations result in small acceptance rates, more random walk behavior and high autocorrelation between samples.  Since ESS is inversely proportional to a Markov chain's asymptotic autocorrelation, larger ESS suggest a useful gradient approximation.  To
isolate the approximation's performance, we fix the Bayesian bridge global-scale
$\tau$ at 2.5$\times10^{-5}$.  We generate a Markov chain with 80,000
states, saving every tenth state and removing the first 1,000 states as
burn-in.  Despite the relatively small number of iterations, we
observe large ESS that suggest satisfactory accuracy of the
first-order approximation within the context of high-dimensional
surrogate HMC.   The ESS for the three fixed-effect regression
coefficients $\theta_1$, $\theta_2$ and $\theta_3$ are 1,053, 1,343
and 498, respectively.  The median and minimum ESS for the 1,892
random effects $\epsilon_{ij}$ are 1,514 and 1,161, respectively.

\section*{Discussion}

We develop tight probabilistic bounds on the error associated
with a simplistic approximation to the matrix exponential derivative
for a large class of CTMC infinitesimal generator matrices with random
entries.  Our ``blessing of dimensionality'' result shows that this
error improves for higher-dimensional matrices.  We apply the
numerically naive approach to the analysis of the global spread of
SARS-CoV-2 using a mixed-effects model of unprecedented dimensions.
The results obtained herein suggest the further study of CTMCs through
the lens of random matrix theory.  Furthermore, this analysis suggests a refinement of the first-order
approximation to the matrix exponential derivative that may be particularly useful within
modern, high-dimensional settings.

\acknow{GD was partially supported by the Simons Foundation
  collaboration grant $\#$714014. NEGH received support for this work
  under NSF DMS 2108790.  AJH is supported by a gift from the Karen
  Toffler Charitable Trust and by grants NIH K25 AI153816, NSF DMS
  2152774 and NSF DMS 2236854. AFM and MAS are partially supported
  through grants NIH R01AI153044 and R01AI162611.}

\showacknow{} 

\bibliography{refs}

\newpage


\appendix

\section{Proof of \cref{thm:ass:error}}
\label{sec:Thm:det:main}

Recall that $\Z(t)$ is defined as in \eqref{eq:true:diff}, and
consider the representation \eqref{eq:ass:procs:0}. We begin by
decomposing the first term $e^{-s\QQ} \JJ e^{s \QQ}$ inside of the
integral in \eqref{eq:ass:procs:0} as
\begin{align}
  e^{-s\QQ} &\JJ e^{s \QQ} \notag\\
  =&e^{-s\QQ} \QQ^+ \QQ \JJ \QQ \QQ^{+} e^{s\QQ}
     + e^{-s\QQ} (\II -\QQ^+ \QQ) \JJ \QQ \QQ^{+} e^{s\QQ}
                  \notag\\
      &+ e^{-s\QQ}\QQ^+ \QQ \JJ (\II - \QQ \QQ^{+}) e^{s\QQ}
                \notag\\
       &+e^{-s\QQ} (\II -\QQ^+ \QQ) \JJ (\II -\QQ \QQ^{+}) e^{s\QQ}
                           \notag\\
  = &e^{-s\QQ} \QQ^+ \QQ \JJ \QQ \QQ^{+} e^{s\QQ}
      + (\II -\QQ^+ \QQ) \JJ \QQ \QQ^{+} e^{s\QQ}
         \label{eq:big:yuck:1}                          
      \\
      &+ e^{-s\QQ}\QQ^+ \QQ \JJ (\II - \QQ \QQ^{+})
  + (\II -\QQ^+ \QQ) \JJ (\II -\QQ \QQ^{+}),
  \notag
\end{align}
where we used \eqref{eq:MP:proj:prop} for the second identity.
Regarding the middle two terms, with our commutativity
\eqref{eq:GI:com} assumption, we observe that
\begin{align}
  (\II &-\QQ^+ \QQ) \JJ \QQ \QQ^{+} e^{s\QQ}
         + e^{-s\QQ}\QQ^+ \QQ \JJ (\II - \QQ \QQ^{+})
                \label{eq:big:yuck:2}                          
                     \\
  &=  (\II -\QQ^+ \QQ) \JJ \QQ^{+} \QQ e^{s\QQ}
    + e^{-s\QQ}\QQ \QQ^+ \JJ (\II - \QQ \QQ^{+})
    \notag\\
  &= \frac{d}{ds}
    \left(
    (\II -\QQ^+ \QQ) \JJ \QQ^{+} e^{s\QQ}
    - e^{-s\QQ} \QQ^+ \JJ (\II - \QQ \QQ^{+})
    \right).
 \notag
\end{align}
Hence, integrating in time, and then using \eqref{eq:GI:prop},
\eqref{eq:GI:com} for the first and last term terms on the right hand side, we have
\begin{align}
  \int_0^t& \big\{(\II -\QQ^+ \QQ) \JJ \QQ \QQ^{+} e^{s\QQ}
                             + e^{-s\QQ}\QQ^+ \QQ \JJ (\II - \QQ \QQ^{+})\big\}ds
                             \notag\\
   =& (\II -\QQ^+ \QQ) \JJ \QQ^{+}e^{t\QQ}
      - e^{-t\QQ} \QQ^+ \JJ (\II - \QQ \QQ^{+})
      \notag\\
    &- (\II -\QQ^+ \QQ) \JJ \QQ^{+} 
      +  \QQ^+ \JJ (\II - \QQ \QQ^{+})
      \notag\\
   =& (\II -\QQ^+ \QQ) \JJ (\QQ^+)^2 \QQ e^{t\QQ}
      - e^{-t\QQ} \QQ^+ \JJ (\II - \QQ \QQ^{+})
      \notag\\
    &- (\II -\QQ^+ \QQ) \JJ \QQ^{+} 
      + \QQ (\QQ^+)^2 \JJ (\II - \QQ \QQ^{+})
             \label{eq:big:yuck:3}                          
\end{align}
Finally let us note, regarding the second term on the right hand side
of \eqref{eq:ass:procs:0}, using again 
\eqref{eq:MP:proj:prop} and \eqref{eq:GI:prop}, \eqref{eq:GI:com}
\begin{align}
  -te^{t\QQ}\JJ 
  =-te^{t\QQ}\QQ \QQ^+ \JJ  - t (\II - \QQ^+\QQ) \JJ.
               \label{eq:small:yuck:4}                          
\end{align}

By combining \eqref{eq:big:yuck:1}--\eqref{eq:small:yuck:4} and
comparing with \eqref{eq:ass:procs:0}, we now find that
\begin{align}\label{e:E(t)=integ+sum}
  \Z(t)\! =\! \! \int_0^t \! \! \! e^{(t-s)\QQ} &\QQ^+ \QQ \JJ \QQ \QQ^{+} e^{s\QQ}ds
           + (\II - \!\QQ^+ \QQ) \JJ (\QQ^{+})^2\QQ e^{t\QQ}
           \notag\\
           &+ e^{t\QQ} \QQ (\QQ^+)^2 \JJ (\II - \QQ \QQ^{+}) -te^{t\QQ}\QQ \QQ^+ \JJ 
           \notag\\
           &-\QQ^+ \JJ (\II - \QQ \QQ^{+})- (\II -\QQ^+ \QQ) \JJ \QQ^{+}
             \notag\\
         &-t(\II -\QQ^+ \QQ) \JJ \QQ \QQ^{+}.
\end{align}
Rearranging \eqref{e:E(t)=integ+sum}, we have 
\begin{align*}
\QQ^+ \JJ (\II - \QQ \QQ^{+}) &+(\II -\QQ^+ \QQ) \JJ \QQ^{+}
   +t(\II -\QQ^+ \QQ) \JJ \QQ \QQ^{+}
                                \notag\\
  &+  \nabla_\JJ e^{t \QQ} - t  e^{t\QQ} \JJ
  = \TT_1 + \TT_2,
\end{align*}
where 
\begin{align*}
 \TT_1 :=&  \int_0^t  \! \! e^{(t-s)\QQ} \QQ^+ \QQ \JJ \QQ \QQ^{+} e^{s\QQ}ds,\\
  \TT_2 :=&  (\II -\QQ^+ \QQ) \JJ (\QQ^{+})^2\QQ e^{t\QQ}\\
         &+ e^{t\QQ} \QQ (\QQ^+)^2 \JJ (\II - \QQ \QQ^{+}) -te^{t\QQ}\QQ \QQ^+ \JJ.
\end{align*}

We estimate each of $\TT_1$ and $\TT_2$ in turn.  Regarding $\TT_1$,
using once more \eqref{eq:GI:com}, as well as \eqref{eq:qetq:exp:decay}, we have
\begin{align*}
  \| \TT_1\| \leq&
    \int_0^t \|\QQ e^{(t-s)\QQ}\| \| \QQ^+  \JJ \QQ^{+} \|\| \QQ e^{s\QQ}\|ds
         \notag\\
   \leq& C_0^2\int_0^t  e^{-(t-s)\kappa} \| \QQ^+  \JJ \QQ^{+} \| e^{-s\kappa } ds
                         \notag\\
   =& C_0^2 t e^{-t\kappa} \| \QQ^+  \JJ  \QQ^{+} \|.
\end{align*}
Turning to $\TT_2$, by \eqref{eq:qetq:exp:decay} we have
\begin{align*}
  \| \TT_2\|
  \leq C_0 (&\| (\II -\QQ^+ \QQ) \JJ (\QQ^{+})^2\|
               + \|(\QQ^+)^2\JJ (\II - \QQ \QQ^{+})\|
               \notag\\
               &+ t\|\QQ^+ \hspace{0.25mm}\JJ\|) e^{- t \kappa }.
\end{align*}
Combining these two bound completes the proof.

\section{Proof of \cref{thm:bride:of:frankinstein}}

This result is established from \cref{thm:ass:error} and some basic
properties $\QQ$ under the given assumptions. First, note that, cf.
\eqref{eq:ev:d:cond:sym:case}, (\ref{eq:EV:real:ass})
\begin{equation}\label{e:max_|lambda-j(Q)|=<|lambda-1(Q)|}
  |\lambda_d(\QQ)| \leq \max_{j =1, \ldots,d_{-}} |\lambda_{j}(\QQ)| \leq |\lambda_1(\QQ)|.
\end{equation}
Referring back to
\eqref{eq:frob:norm}, by \eqref{e:max_|lambda-j(Q)|=<|lambda-1(Q)|} we have the estimate
\begin{align}\label{eq:frob:sucks}
  \| \QQ e^{t\QQ} \|_F
  =&  \left( \sum_{j =1}^{d_{-}}
     | \lambda_j(\QQ) e^{ t  \lambda_j(\QQ)  } |^2 \right)^{1/2}
     \notag\\
  \leq& \sqrt{d_{-}} |\lambda_{1}(\QQ)| e^{-t |\lambda_{d_{-}}(\QQ)|}.
\end{align}
Similarly, from \eqref{eq:op:norm} and \eqref{e:max_|lambda-j(Q)|=<|lambda-1(Q)|}, we have that
\begin{align}\label{eq:op:better}
  \| \QQ e^{t\QQ} \|_{\op}
  \leq& \max_{j =1, \ldots,d_{-}}
        |\lambda_{j}(\QQ)| e^{-t |\lambda_{j}(\QQ)|}
        \notag\\
  \leq& |\lambda_{1}(\QQ)| e^{-t |\lambda_{d_{-}}(\QQ)|}.
\end{align}
Likewise, we observe that
\begin{align}\label{eq:thm1:extra:trms:1:F}
  \lVert \QQ\rVert^2_F = \sum_{k=1}^{d} \lambda_k(\QQ)^2 \, , \quad
  \lVert \QQ^+ \rVert^2_F = \sum_{k =1}^{d_-}\frac{1}{\lambda_k(\QQ)^2} \, ,
\end{align}
and that
\begin{align}\label{eq:thm1:extra:trms:1:F}
  \lVert \QQ\rVert^2_{\op} = \lambda_1(\QQ)^2 \, , \quad
  \lVert \QQ^+ \rVert^2_{\op} = \frac{1}{\lambda_{d_{-}}(\QQ)^2} \,.
\end{align}
Let $\UU$ be as in \eqref{eq:MP:inv:sym:case}. Noting, furthermore,
that $\II - \QQ \QQ^+ = \UU(\II - \II_{d_-})\UU^*$, where $\II_{d_-}$
is the matrix with $1$'s along the first $d_-$ diagonal elements and
zero otherwise, we have the bounds
\begin{align}\label{eq:proj:mat:bnds}
  \|\II - \QQ  \QQ^+\|_{F} = \sqrt{d - d_{-}},
  \quad
  \| \II - \QQ  \QQ^+\|_{\op} = 1.
\end{align}

Now, for any matrix norm $\|\cdot\|$, \eqref{eq:ass:error} implies
that
\begin{align}
    \| t(\II -\QQ^+ \QQ) &\JJ \QQ \QQ^{+}
                           +  \nabla_\JJ e^{t \QQ} - t  e^{t\QQ} \JJ\|
                           \notag\\
  &\leq C(1+ t)e^{-\kappa t}
    + 2\|   \QQ^+\| \|\JJ\| \|\II - \QQ \QQ^{+}\|
    \label{eq:ass:error_proof}
\end{align}    
for any $t \geq 0$, where $C > 0$ is given by
\eqref{eq:exp:error:ass:er}. Thus, for the choice of norm
$\|\cdot\| = \|\cdot\|_{F}$, by \eqref{eq:ass:error_proof},
\eqref{eq:frob:sucks}, \eqref{eq:thm1:extra:trms:1:F} and
\eqref{eq:proj:mat:bnds}, we obtain
\eqref{eq:bride:of:frankinstein:1}.  Similarly, for the choice of norm
$\|\cdot\| = \|\cdot\|_{\op}$, \eqref{eq:ass:error_proof}, in
conjunction with \eqref{eq:op:better}, \eqref{eq:thm1:extra:trms:1:F}
and \eqref{eq:proj:mat:bnds}, yields the bound
\eqref{eq:bride:of:frankinstein}.  Since
\eqref{eq:bride:of:frankinstein:1:rm:app} follows immediately from
\eqref{eq:bride:of:frankinstein:1}, and similarly between
\eqref{eq:bride:of:frankinstein:0:rm:app} and
\eqref{eq:bride:of:frankinstein}, the proof is now complete.

\section{Proof of \cref{t:lambda_n-1(Q)->-infty_non-symm_a.s.}}
\label{sec:Thm:rand:mat:main}

For use in this section, we introduce a probabilistic version of the
typical $O$ and $o$ asymptotic notations.  Given collections of random
variable $\{X_d\}_{d \in \bbN}$, $\{Y_d\}_{d \in \bbN}$, we write
\begin{equation}\label{e:X=O_a.s.(1)_Y=o_a.s.(1)}
X_d = O_{\as}(f(d)),\quad Y_d  = o_{\as}(f(d)), 
\end{equation}
for some $f: \bbN \to \RR^+$ to mean, respectively, that there exists
a random variable $C = C(\omega)$, not dependent on $d$, such
that $|X_d|/f(d) \leq C$ for all $d \in \bbN$ a.s., and also that
$\lim_{d \rightarrow \infty}Y_d/f(d) = 0$ a.s.

In everything that follows we will make use of the so-called
\textit{Weyl's inequalities} (see e.g.
\cite[Theorem 4.3.1, p.\ 239]{horn:johnson:2012}). Namely, let
${\mathbf A},{\mathbf B} \in {\mathcal S}(d,\bbC)$, each with its own
eigenvalues listed in increasing order as in our convention
\eqref{eq:EV:real:ass}. Then, for $i = 1,\hdots,d$,
\begin{equation}\label{e:Weyl_upper}
  \lambda_i({\mathbf A}+{\mathbf B}) \leq
  \lambda_{i+j}({\mathbf A}) + \lambda_{d-j}({\mathbf B}),
  \quad j = 0,1,\hdots,d-i,
\end{equation}
and
\begin{equation}\label{e:Weyl_lower}
  \lambda_{i-j+1}({\mathbf A}) + \lambda_j({\mathbf B})
  \leq \lambda_i({\mathbf A}+{\mathbf B}), \quad j = 1,\hdots,i.
\end{equation}
In what follows, we also make use of the fact that, for ${\mathbf A} \in {\mathcal S}(d,\bbC)$,
\begin{equation}\label{e:eigenvals_of_-A}
  \lambda_\ell(-{\mathbf A}) = - \lambda_{d-\ell+1}({\mathbf A}),
  \quad \ell = 1,\hdots,d.
\end{equation}

With these preliminaries now in hand, we turn to the proof of
\cref{t:lambda_n-1(Q)->-infty_non-symm_a.s.}.  This result is
established with the aide of four auxiliary results, \cref{l:||Q||/n=O(1/n^(1/2-eta)},
\cref{l:max_Yn/n^eps}, \cref{l:||Q_4||=o_P(1)} and \cref{l:lambda-2(Q3)>=1+1/n}, whose
precise statements and proof are provided immediately afterward.
\begin{proof}
  To demonstrate our desired result,
  \eqref{e:lambda_n-1(Q)->-infty_non-symm}, it suffices to establish
  the lower bound
  \begin{equation}\label{e:sigma^2_2(Q)>=_a.s._bound}
    \sigma^2_2(\QQ) \geq
    (d-1)^2 \Big\{ \Big(\frac{\mu \hspace{0.5mm} d}{d-1}\Big)^2
    + O_{\as}\Big(\sqrt{ \frac{\log d}{d}}\Big)\Big\}
  \end{equation}
  as well as the upper bound
  \begin{equation}\label{e:sigma^2_2(Q)=<_a.s._bound}
    \sigma^2_d(\QQ) \leq
    (d-1)^2 \Big\{ \Big(\frac{\mu \hspace{0.5mm} d}{d-1}\Big)^2
    + O_{\as}\Big(\sqrt{ \frac{\log d}{d}}\Big)\Big\}.
  \end{equation}
  With this in mind, we now decompose $\QQ$ as follows, starting with
  the case \eqref{e:RM_theorem_general_condition}.  Fix
  $d \in \bbN \backslash\{1\}$.  Taking $F_X$ to be the distribution
  defining the elements in \eqref{e:RM_theorem_general_condition}, draw
  \begin{align}\label{eq:diag:fill:trk}
    \{\widetilde{q}_{ii}\}_{i=1,\hdots,d}
    \stackrel{\textnormal{iid}}\sim F_X
    \text{ independently of } \{q_{ij}\}_{i,j=1,\hdots,d}.
   \end{align} 
   Here note carefully that $\widetilde{q}_{ii} \stackrel{d}= q_{ij}$
   and $\widetilde{q}_{ii} \stackrel{d}\neq q_{ii}$.  Now recast
   $\QQ = \{q_{ij}\}_{i,j}$ as 
\begin{align}
  & \begin{pmatrix}
0 & q_{12}-\mu & \hdots & q_{1d}-\mu \\
q_{21}-\mu & 0 & \hdots & q_{2d}-\mu \\
\vdots & \vdots & \ddots & \vdots \\
q_{d1}-\mu & q_{d2}-\mu & \hdots & 0 \\
\end{pmatrix}
+
\begin{pmatrix}
q_{11} & \mu & \hdots & \mu \\
\mu & q_{22} & \hdots & \mu \\
\vdots & \vdots & \ddots & \vdots \\
\mu & \mu & \hdots & q_{dd} \\
\end{pmatrix} \notag \\
  &=\left\{\!
    \begin{pmatrix}
\widetilde{q}_{11}-\mu & q_{12}-\mu & \hdots & q_{1d}-\mu \\
q_{21}-\mu & \widetilde{q}_{22}-\mu & \hdots & q_{2d}-\mu \\
\vdots & \vdots & \ddots & \vdots \\
q_{d1}-\mu & q_{d2}-\mu & \hdots & \widetilde{q}_{dd}-\mu \\
\end{pmatrix}\right.
  \label{eq:En:or:Wigner}\\
& \qquad \quad \quad +
\left.\begin{pmatrix}
-\widetilde{q}_{11}+\mu & 0 & \hdots & 0 \\
0 & -\widetilde{q}_{22}+\mu & \hdots & 0 \\
\vdots & \vdots & \ddots & \vdots \\
0 & 0 & \hdots & -\widetilde{q}_{dd}+\mu \\
\end{pmatrix}\!\right\}
  \notag \\
  &\quad\;+ (1-d)
    \left\{ \!
    \left(\! \!\!\!
      \begin{array}{*{24}{c@{\hspace{2pt}}}}
\frac{-q_{11}-(d-1)\mu}{d-1} & 0 & \hdots & 0 \\
0 & \frac{-q_{22}-(d-1)\mu}{d-1} & \hdots & 0 \\
\vdots & \vdots & \ddots & \vdots \\
0 & 0 & \hdots & \frac{-q_{dd}-(d-1)\mu}{d-1}\\
    \end{array}
  \! \!  \right)
  \right.
  \notag\\ 
  &\qquad \qquad \qquad \quad+  \left. 
        \begin{pmatrix}
\mu & \frac{-\mu}{d-1} & \hdots & \frac{-\mu}{d-1} \\
\frac{-\mu}{d-1} & \mu & \hdots & \frac{-\mu}{d-1} \\
\vdots & \vdots & \ddots & \vdots \\
\frac{-\mu}{d-1} & \frac{-\mu}{d-1} & \hdots & \mu \\
\end{pmatrix}
  \!\right\}
  \label{e:Q_basic_decomp}\\
  &=: \QQ_1 + \QQ_2 + (1-d) \big\{\QQ_3 +  \QQ_4\big\}
  =: \RM + (1-d) \QQ_4  \label{e:Q1,Q2,Q3}
\end{align}

Working from \eqref{e:Q1,Q2,Q3} we start by establishing the first
bound \eqref{e:sigma^2_2(Q)>=_a.s._bound} under
\eqref{e:RM_theorem_general_condition}. Noting from
\cref{l:lambda-2(Q3)>=1+1/n} that $\QQ_4$ is symmetric and positive,
making use of \eqref{e:Weyl_lower}, first with $i =2, j =1$, then with
$i=1, j =1$, invoking \eqref{e:eigenvals_of_-A} and finally using that
$\RM\RM^*$ is symmetric and non-negative, we find
\begin{align}
  &\sigma^2_2(\QQ) = \lambda_2(\QQ\QQ^*)
  \notag\\
 &\geq (d-1)^2 \Big\{ \lambda_{2}(\QQ^2_4) + \lambda_{1}\Big(\frac{\RM\RM^*}{(d-1)^2}
   - \frac{\QQ_4\RM^*+ \RM\QQ^*_4}{d-1} \Big) \Big\}
  \notag\\
 &\geq (d-1)^2 \Big\{ \lambda_{2}(\QQ^2_4) + \lambda_{1}\Big(\frac{\RM\RM^*}{(d-1)^2}\Big)
   +\lambda_{1} \Big(\frac{\QQ_4\RM^*+ \RM\QQ^*_4}{1-d} \Big) \Big\}
  \notag\\
 &\geq (d-1)^2 \Big\{ \lambda_{2}(\QQ^2_4) 
   -\lambda_{d} \Big(\frac{\QQ_4\RM^*+ \RM\QQ^*_4}{d-1} \Big) \Big\}.
   \label{eq:sig:2:lower:setup}
\end{align}
Now observe
\begin{align}
  |\lambda_j(\AAA)| \leq \sigma_d(\AAA) = \| \AAA\|_{op} \quad
  \text{ for
  any $j =1,\ldots, d$},
  \label{eq:dumb:op:norm:ob}
\end{align}
whenever $\AAA$ is symmetric.
Furthermore,
\begin{equation}\label{e:lambda_2(Q^2_3)}
\lambda_{2}(\QQ^2_4) = \lambda^2_{2}(\QQ_4) = \frac{\mu^2d^2}{(d-1)^2},
\end{equation}
where the equality follows from
\eqref{e:lambda-ell(Q3)=1+1/n,ell=2,...,n} in
\cref{l:lambda-2(Q3)>=1+1/n}.  Combining \eqref{eq:sig:2:lower:setup},
\eqref{eq:dumb:op:norm:ob} and \eqref{e:lambda_2(Q^2_3)}, we therefore
infer
\begin{equation}\label{e:sigma^2_2(Q)>=_bound}
  \sigma^2_2(\QQ) \geq (d-1)^2
  \Big\{ \Big( \frac{\mu \hspace{0.5mm} d}{d-1}\Big)^2 
- \frac{1}{d-1} \|\QQ_4\RM^*+ \RM\QQ^*_4\|_{\op}\Big\}.
\end{equation}

However, by  \eqref{e:||Q||/n=O(1/n^(1/2-eta)} in \cref{l:||Q||/n=O(1/n^(1/2-eta)},  \eqref{e:oas:Zd:con} in \cref{l:max_Yn/n^eps}
and \eqref{e:||Q_4||=o_P(1)} from \cref{l:||Q_4||=o_P(1)}
we have
\begin{align}
  \frac{d}{d-1}  \frac{\|\RM\|_{\op}}{d}
  \leq& \frac{d}{d-1} \left(\frac{\|\QQ_1\|_{\op}}{d} +
  \frac{\|\QQ_2\|_{\op}}{d}\right) + \|\QQ_3\|_{\op}
  \notag\\
       =& O_{\as}\Big(\sqrt{ \frac{\log d}{d}}\Big).
        \label{e:||Q-cal/n||=oP(1)}
\end{align}
Hence, by \eqref{e:||Q-cal/n||=oP(1)} and
\eqref{e:lambda-ell(Q3)=1+1/n,ell=2,...,n} in
\cref{l:lambda-2(Q3)>=1+1/n},
\begin{align}
  \frac{1}{d-1} &\|\QQ_4\RM^*+ \RM\QQ^*_4\|_{\op}
  \leq \frac{2d}{d-1} \hspace{0.5mm} \|\QQ_4 \|_{\op} \Big\|\frac{\RM}{d} \Big\|_{\op}
  \notag\\
 &=  O(1) \cdot O_{\as}\Big(\sqrt{ \frac{\log d}{d}}\Big)
  = O_{\as}\Big(\sqrt{ \frac{\log d}{d}}\Big).
  \label{e:(1/n)||Q3_Q*-cal+Q-cal_Q3||}
\end{align}
Thus, by \eqref{e:sigma^2_2(Q)>=_bound}, \eqref{e:||Q-cal/n||=oP(1)} and \eqref{e:(1/n)||Q3_Q*-cal+Q-cal_Q3||}, the lower bound \eqref{e:sigma^2_2(Q)>=_a.s._bound} holds.

We now turn to verify \eqref{e:sigma^2_2(Q)=<_a.s._bound}. Here, by
\eqref{e:Q1,Q2,Q3}, \eqref{e:Weyl_upper} with $j = 0, i = d$, 
\eqref{e:lambda_2(Q^2_3)} and finally \eqref{eq:dumb:op:norm:ob},
\begin{align*}
  &\sigma^2_d(\QQ)
  = \lambda_d(\QQ \QQ^*)
  \notag\\
   &\leq (d-1)^2 \Big\{ \lambda_{d}(\QQ^2_4)
     + \lambda_{d}\Big(\frac{{\mathbf R}{\mathbf R}^*}{(d-1)^2}\Big) +
     \lambda_d\Big(\frac{\QQ_4{\mathbf R}^*+ {\mathbf R}\QQ^*_4}{1-d} \Big) \Big\}
     \notag\\
  &\leq (d-1)^2 \Big\{ \Big(\frac{\mu \hspace{0.5mm} d}{d-1}\Big)^2
    + \Big\|  \frac{{\mathbf R}}{d-1}\Big\|^2_{\op}
    + \frac{1}{d-1} \|\QQ_4{\mathbf R}^*+ {\mathbf R}\QQ^*_4\|_{\op}\Big\}.
\end{align*}
The upper bound \eqref{e:sigma^2_2(Q)=<_a.s._bound} now follows from
\eqref{e:||Q-cal/n||=oP(1)} and \eqref{e:(1/n)||Q3_Q*-cal+Q-cal_Q3||}.
Thus, as a consequence of \eqref{e:sigma^2_2(Q)>=_a.s._bound} and
\eqref{e:sigma^2_2(Q)=<_a.s._bound},
\eqref{e:lambda_n-1(Q)->-infty_non-symm} holds under condition
\eqref{e:RM_theorem_general_condition}, as claimed.  We have
established the desired result in the first case \eqref{e:RM_theorem_general_condition}.

Now suppose that condition \eqref{e:RM_theorem_symmetry_condition}
holds. In order to establish \eqref{e:lambda_n-1(Q)->-infty_non-symm},
it suffices to replace $q_{ij}$, $i < j$, with $q_{ji}$ in expression
\eqref{eq:En:or:Wigner} and in the definition of $\QQ_3$.  We then
follow the rest of the argument for proving
\eqref{e:lambda_n-1(Q)->-infty_non-symm} under condition
\eqref{e:RM_theorem_general_condition} noting that
\cref{l:||Q||/n=O(1/n^(1/2-eta)}, \cref{l:||Q_4||=o_P(1)} also hold in
the symmetric case.  This concludes the proof.
\end{proof}


We turn now to the Lemmata \ref{l:||Q||/n=O(1/n^(1/2-eta)},
\ref{l:max_Yn/n^eps}, \ref{l:||Q_4||=o_P(1)} and
\ref{l:lambda-2(Q3)>=1+1/n}, which are used in the proof of
\cref{t:lambda_n-1(Q)->-infty_non-symm_a.s.}.  These results
correspond to the bounds we use for each of elements in the
decomposition \eqref{e:Q1,Q2,Q3}. We start off with
\cref{l:||Q||/n=O(1/n^(1/2-eta)} which essentially packages results
found in \cite{yin:bai:krishnaiah:1988}, \cite{bai:1999}.

\begin{lemma}\label{l:||Q||/n=O(1/n^(1/2-eta)}
  Let $\QQ_1 = \QQ_1(d)$, $d \in \bbN \setminus \{1\}$, be the
  sequence of random matrices defined as in \eqref{eq:En:or:Wigner}.
  Here we suppose that the off diagonal elements $q_{ij}$ defining
  $\QQ_1$ are either as in \eqref{e:RM_theorem_general_condition} or
  as in \eqref{e:RM_theorem_symmetry_condition} and that the diagonal
  elements $\widetilde{q}_{ii}$ are drawn as in
  \eqref{eq:diag:fill:trk}.  Then, in either case, we have
  \begin{equation}\label{e:||Q||/n=O(1/n^(1/2-eta)}
    \frac{\|\QQ_1\|_{\op}}{d} = O_{\as}\Big( \frac{1}{\sqrt{d}}\Big).
  \end{equation}
\end{lemma}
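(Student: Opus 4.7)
The plan is to recognize $\QQ_1$ as a classical random matrix with mean-zero entries and to invoke Bai-Yin-type results on the extreme singular value, treating the symmetric and non-symmetric cases separately. The centering by $\mu$ in \eqref{eq:En:or:Wigner} is crucial: every entry of $\QQ_1$ has mean zero and variance $\sigma^2$. Moreover, the sub-exponentiality of $F_X$ (\cref{def:sub:exp:RV}, \cref{rmk:exp:mom}) is shift-invariant, so each centered entry $q_{ij}-\mu$ and $\widetilde{q}_{ii}-\mu$ remains sub-exponential and thus has finite moments of every order; in particular, the finite fourth moment required by Bai-Yin is readily available.

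In the non-symmetric case \eqref{e:RM_theorem_general_condition}, the fresh diagonal draws \eqref{eq:diag:fill:trk} render the entries of $\QQ_1$ jointly independent. I would apply the Bai-Yin theorem for matrices with iid centered entries \cite{yin:bai:krishnaiah:1988} to the Gram matrix $\QQ_1 \QQ_1^*$ to obtain $\lambda_d(\QQ_1 \QQ_1^*)/d \to 4\sigma^2$ almost surely, so that
\begin{equation*}
\frac{\|\QQ_1\|_{\op}}{\sqrt{d}} = \sqrt{\frac{\lambda_d(\QQ_1 \QQ_1^*)}{d}} \longrightarrow 2\sigma \quad \as
\end{equation*}
Since an almost surely convergent sequence is almost surely bounded by a random constant uniform in $d$, this immediately gives $\|\QQ_1\|_{\op}/d = O_{\as}(1/\sqrt{d})$.

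In the symmetric case \eqref{e:RM_theorem_symmetry_condition}, $\QQ_1$ is a bona fide Wigner matrix: symmetric with iid above-diagonal entries and an independent centered diagonal, both inheriting the required moment condition. The Bai-Yin theorem for Wigner matrices \cite{bai:1999} then yields $\|\QQ_1\|_{\op}/\sqrt{d} \to 2\sigma$ almost surely, again producing the claimed bound.

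The main effort here lies not in any delicate calculation but in confirming that the entries of $\QQ_1$ satisfy the hypotheses of the Bai-Yin theorems. Once shift-invariance of the sub-exponential property is recorded and the replacement of the true diagonal by the fresh iid copies $\widetilde{q}_{ii}$ is absorbed, the lemma becomes a direct invocation of two well-established random matrix results, which is why the authors describe the proof as essentially ``packaging'' existing results.
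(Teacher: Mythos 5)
Your proposal is correct and follows essentially the same route as the paper: both treat $d^{-1}\QQ_1\QQ_1^*$ as a sample covariance matrix and invoke Yin--Bai--Krishnaiah in the independent-entry case, and treat $\QQ_1$ as a centered Wigner matrix and invoke Bai's Theorem 2.12 in the symmetric case, with the moment hypotheses supplied by sub-exponentiality of the centered entries. Your added remarks on shift-invariance of the sub-exponential property and on passing from a.s.\ convergence to an $O_{\as}$ bound are just explicit versions of steps the paper leaves implicit.
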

\begin{proof}
  To establish \eqref{e:||Q||/n=O(1/n^(1/2-eta)} in the first case,
  \eqref{e:RM_theorem_general_condition}, we note that
  $d^{-1}\QQ_1 \QQ^*_1$ forms a sample covariance matrix, where each
  entry of $\QQ_1$ is centered and has infinitely many moments; cf.
  \cref{rmk:exp:mom}. Thus, by \cite[Theorem 3.1, p.\
  517]{yin:bai:krishnaiah:1988},
\begin{equation}\label{e:|Q-tilde_1|=sqrt(n)OP(1)}
  \|\QQ_1\|_{\op} = \sqrt{d}\cdot \sqrt{\lambda_{d}\Big(\frac{\QQ_1 \QQ^*_1}{d}\Big)}
  = \sqrt{d}\cdot \sqrt{4 \sigma^2 + o_{\as}(1)},
\end{equation}
which immediately yields \eqref{e:||Q||/n=O(1/n^(1/2-eta)}.

In the second case, where all the off-diagonal elements are determined
starting from \eqref{e:RM_theorem_symmetry_condition}, $\QQ_1$ is a
centered (mean-zero) Wigner matrix in the typical nomenclature
followed by \cite{bai:1999}.  Thus, noting \eqref{eq:op:norm} and
that, as in the previous case, $\QQ_1$ has infinitely many moments
\eqref{e:||Q||/n=O(1/n^(1/2-eta)} is thus a direct consequence of
\cite[Theorem 2.12, p.\ 630]{bai:1999}.
\end{proof}

We next state and establish \cref{l:max_Yn/n^eps}. Note that the
assumption of independence among the random variables is not needed
for this result.  Here and below in what follows we adopt the useful
notational convention that $o_{\textnormal{a.s.}}(1)$ denotes a random
variable that vanishes almost surely, as $d\to \infty$ (cf.\
\eqref{e:X=O_a.s.(1)_Y=o_a.s.(1)}).
\begin{lemma}\label{l:max_Yn/n^eps}
  Let $\{Y_{d}\}_{d \in \bbN}$ be an identically distributed sequence
  of sub-exponential variables with mean $\mu$;
  cf. \cref{def:sub:exp:RV}. Let
  \begin{align}
    Z_{d} = \max_{k=1,\hdots,d} |Y_k - \mu|, d \in \bbN.
    \label{eq:max:sub:exp}
  \end{align} 
  Then, for any fixed $\varepsilon_0 > 0$, 
  \begin{align}\label{e:oas:Zd:con}
   \frac{Z_d}{d^{\varepsilon_0}} = o_{\as}(1).
  \end{align}
   In particular, for the sequence of diagonal matrices
   $\QQ_2 \equiv \QQ_2(d)$ given as in \eqref{e:Q1,Q2,Q3}, and any
   $\varepsilon_0 > 0$,
\begin{align}
  \frac{\|\QQ_2\|_{op}}{d^{\varepsilon_0}} = o_{\as}(1).
  \label{e:QQ2=oP(1)}
\end{align}
\end{lemma}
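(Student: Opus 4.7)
The plan is to establish \eqref{e:oas:Zd:con} via a standard Borel--Cantelli argument exploiting the sub-exponential tail bound, and then to derive \eqref{e:QQ2=oP(1)} as an immediate corollary, since the operator norm of a diagonal matrix is its largest entry in absolute value.

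First I would check that the centered random variables $Y_k-\mu$ retain the sub-exponential tail decay. Since $Y_k$ is sub-exponential, for some $K>0$ and every $t \geq |\mu|$ we have
\begin{equation*}
\bbP(|Y_k -\mu | \geq t) \leq \bbP(|Y_k | \geq t - |\mu|) \leq 2 e^{-(t-|\mu|)/K},
\end{equation*}
so there exist constants $C_\star, K_\star >0$, independent of $k$, such that $\bbP(|Y_k -\mu | \geq t) \leq C_\star e^{-t/K_\star}$ for all $t \geq 0$. Importantly, the $Y_k$ are identically distributed, so the same tail bound applies uniformly in $k$; no independence is required at this stage.

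Next, for any fixed $\delta > 0$, a union bound (Boole's inequality) applied to $Z_d = \max_{k=1,\hdots,d}|Y_k -\mu|$ yields
\begin{equation*}
\bbP(Z_d \geq \delta d^{\varepsilon_0}) \leq \sum_{k=1}^d \bbP(|Y_k -\mu| \geq \delta d^{\varepsilon_0}) \leq C_\star d \, e^{-\delta d^{\varepsilon_0}/K_\star}.
\end{equation*}
Since the exponential factor decays faster than any polynomial grows, the series $\sum_{d=1}^\infty C_\star d \, e^{-\delta d^{\varepsilon_0}/K_\star}$ converges. The Borel--Cantelli lemma then gives $\bbP(Z_d \geq \delta d^{\varepsilon_0} \text{ i.o.}) = 0$, so $\limsup_{d \to \infty} Z_d/d^{\varepsilon_0} \leq \delta$ almost surely. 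Taking $\delta$ along a countable sequence tending to $0$ yields $Z_d/d^{\varepsilon_0} \to 0$ a.s., which is \eqref{e:oas:Zd:con}.

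Finally, for \eqref{e:QQ2=oP(1)}, I would simply observe from \eqref{eq:En:or:Wigner} that $\QQ_2$ is the diagonal matrix with entries $\mu - \widetilde{q}_{ii}$, so that, by \eqref{eq:op:norm},
\begin{equation*}
\|\QQ_2\|_{\op} = \max_{i=1,\hdots,d} |\widetilde{q}_{ii} - \mu|.
\end{equation*}
Since the $\widetilde{q}_{ii}$ are drawn iid from $F_X \in L_{\psi_1}$ with mean $\mu$, they form an identically distributed sub-exponential sequence, and applying the first part with $Y_k = \widetilde{q}_{kk}$ gives the conclusion. The main obstacle is essentially non-existent; the only mild subtlety is verifying that centering preserves the sub-exponential tail bound, which follows from a trivial translation of the argument of the tail probability.
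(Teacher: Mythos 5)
Your proof is correct and follows essentially the same route as the paper's: a union bound over the $d$ identically distributed terms, the sub-exponential tail estimate, and Borel--Cantelli, with the only cosmetic difference being that you fix a threshold $\delta d^{\varepsilon_0}$ and send $\delta\to 0$ along a countable sequence, whereas the paper runs a single Borel--Cantelli argument against the shrinking threshold $M d^{\varepsilon_0/2}$. Your explicit verification that centering preserves the sub-exponential tail (absorbing $e^{|\mu|/K}$ into the constant) is a small point the paper leaves implicit, and is a welcome addition.
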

\begin{proof}
For any $M > 0$,
\begin{align*}
  \bbP\Big(\frac{Z_d}{d^{\varepsilon_0}} \geq \frac{M}{d^{\varepsilon_0/2}}\Big)
  &= \bbP\Big(\bigcup^{d}_{k=1}\Big\{\frac{|Y_k - \mu|}{d^{\varepsilon_0}}
  \geq \frac{M}{d^{\varepsilon_0/2}}\Big\}\Big) \\
  &\leq d \cdot\bbP\big(|Y_1 - \mu| \geq d^{\varepsilon_0/2}M\big) 
 \leq  2d\cdot e^{- \frac{d^{\varepsilon_0/2}M}{K}}.
\end{align*}
In the first and second inequalities, respectively, we use the
assumption that the random variables $\{Y_{d}\}_{d \in \bbN}$ are
identically distributed as well as relation \eqref{e:def_subexp_RV}
for some suitable value $K > 0$. Therefore,
$\sum^{\infty}_{d=1}\bbP(Z_d/d^{\varepsilon_0/2} \geq
M/d^{\varepsilon_0/2})< \infty.$ Since
$\lim_{d \rightarrow \infty}M/d^{\varepsilon_0/2} = 0$,
\eqref{e:oas:Zd:con} now follows as a consequence of the
Borel-Cantelli lemma.

Regarding the second claim observe that, cf. \eqref{eq:diag:fill:trk},
$\QQ_2$ is a diagonal matrix containing identically distributed
sub-exponential random variables.  This mean that, taking
$Y_k :=\tilde{q}_{kk}$, $\|\QQ_2\|$ is of the form
\eqref{eq:max:sub:exp}. As such the second claim \eqref{e:QQ2=oP(1)}
now follows from the first \eqref{e:oas:Zd:con}, completing the proof.
\end{proof}

Turning to our bounds on $\|\QQ_3\|_{op}$, we have the following lemma.
\begin{lemma}\label{l:||Q_4||=o_P(1)} Let $\QQ_3 \equiv \QQ_3(d)$,
  $d \in \bbN$, be a sequence of random matrices as in
  \eqref{e:Q1,Q2,Q3}. Here we assume that the $q_{ii}$ are determined
  either according to \eqref{e:RM_theorem_general_condition} or
  according to \eqref{e:RM_theorem_symmetry_condition}. Then, in
  either of these cases,
    \begin{equation}\label{e:||Q_4||=o_P(1)}
      \|\QQ_3\|_{\op} = O_{\as}\Big(\sqrt{ \frac{\log d}{d}}\Big).
    \end{equation}
\end{lemma}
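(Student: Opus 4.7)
The plan is to exploit the fact that $\QQ_3$ is a diagonal matrix, so that $\|\QQ_3\|_{\op}$ equals the maximum absolute value of its diagonal entries, and then combine a Bernstein-type concentration inequality for sub-exponential sums with a union bound and Borel--Cantelli.

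First, I would rewrite each diagonal entry explicitly. Since $\QQ \in \mathcal{R}(d)$ imposes the row-sum constraint $-q_{ii} = \sum_{j \neq i} q_{ij}$, the $i$-th diagonal element of $\QQ_3$ becomes
\begin{equation*}
  [\QQ_3]_{ii} = \frac{-q_{ii} - (d-1)\mu}{d-1} = \frac{1}{d-1} \sum_{j \neq i} (q_{ij} - \mu),
\end{equation*}
which is a normalized sum of $d-1$ centered random variables. In either case \eqref{e:RM_theorem_general_condition} or \eqref{e:RM_theorem_symmetry_condition}, for each \emph{fixed} row index $i$, the family $\{q_{ij} - \mu\}_{j \neq i}$ is iid (in the symmetric case, this is because the indices $j<i$ and $j>i$ just select disjoint collections of independent draws from $F_X$). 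Since $F_X \in L_{\psi_1}$, each $q_{ij} - \mu$ is sub-exponential with norm bounded by some constant $K$ depending only on $F_X$, not on $d$.

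Next I would apply Bernstein's inequality for sums of independent centered sub-exponential random variables (e.g.\ \cite[Theorem 2.8.1]{vershynin:2018}) to the sum $S_i := \frac{1}{d-1}\sum_{j \neq i}(q_{ij} - \mu)$. This yields
\begin{equation*}
   \bbP(|S_i| \geq t) \leq 2 \exp\bigl( -c (d-1) \min(t^2/K^2, t/K) \bigr)
\end{equation*}
for a universal constant $c > 0$. Substituting $t = M\sqrt{\log d / d}$ with $M > 0$ a constant to be chosen, for $d$ sufficiently large we have $t \leq K$ so the minimum equals $t^2/K^2$, and we obtain a bound of the form $2 d^{-c' M^2}$ for some $c'>0$.

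The final step is a union bound over the $d$ diagonal indices followed by Borel--Cantelli. The union bound gives
\begin{equation*}
  \bbP\!\left(\max_{1\le i\le d} |S_i| \geq M\sqrt{\log d / d}\right) \leq 2\, d^{\,1 - c' M^2}.
\end{equation*}
Choosing $M$ large enough that $c'M^2 > 2$, the series $\sum_d 2 d^{1 - c'M^2}$ converges, so Borel--Cantelli yields $\max_i |S_i| \leq M \sqrt{\log d/d}$ eventually almost surely. Since $\|\QQ_3\|_{\op} = \max_i |S_i|$, this is precisely \eqref{e:||Q_4||=o_P(1)}.

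The main obstacle I anticipate is verifying that the same argument really does handle the symmetric case \eqref{e:RM_theorem_symmetry_condition}, where the different rows of $\QQ$ share entries and are therefore not jointly independent. The crucial observation is that the union bound requires \emph{no} independence between rows, only the per-row concentration estimate, which in turn uses only that \emph{within} a fixed row the $q_{ij}$'s are iid. Beyond this, the argument reduces to routine application of Bernstein, union bound, and Borel--Cantelli, with constants that are uniform in $d$ thanks to the sub-exponential hypothesis on $F_X$.
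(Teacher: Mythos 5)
Your proposal is correct and follows essentially the same route as the paper's proof: writing each diagonal entry of $\QQ_3$ as a normalized centered sum $\frac{1}{d-1}\sum_{j\neq i}(q_{ij}-\mu)$, applying Bernstein's inequality for sub-exponential variables, taking a union bound over the $d$ rows (which, as you correctly note, needs no independence between rows), and concluding via Borel--Cantelli. The only cosmetic difference is that the paper fixes the threshold as $\|X_\mu\|_{\psi_1}\sqrt{\eta(d-1)\log d}$ with $\eta=(2+\delta)/C$ rather than tuning a constant $M$ so that $c'M^2>2$, which is the same calculation.
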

\begin{proof}
  Under either \eqref{e:RM_theorem_general_condition} or
  \eqref{e:RM_theorem_symmetry_condition}, each of the entries along
  the main diagonal is a renormalized sums of iid sub-exponential
  random variables i.e.
  \begin{align}
    \frac{-q_{ii} - \mu(d-1)}{d-1}
    = \frac{1}{d-1} \sum_{j \not =i}^d (q_{ij} - \mu)
  \end{align}
  While these diagonal elements are not independent under
  \eqref{e:RM_theorem_symmetry_condition} note carefully
  that we do not use the independence of the rows of $\QQ_3$
  in the arguments that follow.

  Now, by Bernstein's inequality,
  \cite[p.\ 29, Theorem 2.8.1]{vershynin:2018}, there exists a
  constant $C > 0$ such that, for any $\varepsilon > 0$,
  \GD{\begin{align}
        \bbP&\big(\big|-q_{ii}-\mu(d-1)\big| > \varepsilon\big)\notag\\
            &\leq 2 \cdot \exp\Big\{ - C \min\Big\{
              \frac{\varepsilon^2}{(d-1)\|X_\mu\|^{2}_{\psi_1}},
              \frac{\varepsilon}{\|X_\mu\|_{\psi_1}} \Big\}\Big\},
    \label{e:P(max|-qii-n|/n>eps)}
\end{align}
for $i = 1, \ldots, d$ where $X_{\mu}:= X - \mu$ for $X \sim F_X$ and
the right-hand side of \eqref{e:P(max|-qii-n|/n>eps)} involves the
sub-exponential norm \eqref{e:sub-exp_norm}. Fix $\delta > 0$, and let
$\eta:= (2 + \delta)/C$ where $C$ is the constant in the upper
bound in \eqref{e:P(max|-qii-n|/n>eps)}. Hence, for all
$d \in \bbN \backslash \{1\}$,
\begin{align}\label{e:P(|Q3|>eps)}
  \bbP&\Big( \sqrt{\frac{d-1}{\eta \log d}}
        \hspace{0.5mm}\|\QQ_3(d)\|_{\op}
        > \|X_\mu\|_{\psi_1}\Big)\notag\\
      &= \bbP\Big(
        \max_{i=1,\hdots,d}\Big|\frac{-q_{ii}-\mu(d-1)}{d-1}\Big|
        > \|X_\mu\|_{\psi_1}\sqrt{\frac{\eta \log d}{d-1}}\Big) \notag\\
      &= \bbP\Big(\bigcup^{d}_{i=1}\big\{\big|-q_{ii}-\mu(d-1)\big|
        > \|X_\mu\|_{\psi_1}\sqrt{\eta (d-1)\log d}\big\}\Big) \notag \\ 
      &\leq \sum^{d}_{i=1}\bbP\Big(\big|-q_{ii}-\mu(d-1)\big|
        > \|X_\mu\|_{\psi_1}\sqrt{\eta (d-1)\log d}\Big) \notag \\
      & \leq  2d \cdot \exp\big\{ -  C \min\big\{ \eta  \log d ,
        \sqrt{ \eta (d-1) \log d } \big\}\big\}.
\end{align}}
In the second inequality in \eqref{e:P(|Q3|>eps)},
we use \eqref{e:P(max|-qii-n|/n>eps)} with
$\varepsilon := \|X_\mu\|_{\psi_1} \sqrt{\eta (d-1)\log d}$
and the fact that $\{q_{ii}\}_{i=1,\hdots,d}$ are identically
(but not necessarily independently) distributed. Therefore,
for every $d$ sufficiently large,
\begin{align*}
\bbP\Big( \sqrt{\frac{d-1}{\eta \log d}}\|\QQ_3(d)\|_{\op}
> \|X_\mu\|_{\psi_1}\Big)\leq  \frac{2}{d^{1 + \delta}},
\end{align*}
and as a consequence,
\begin{align*}
\sum^{\infty}_{d=1}\bbP\big(\|\QQ_3(d)\|_{\op} >
  \|X_\mu\|_{\psi_1}\sqrt{\frac{\eta \log d}{d-1}}\big)< \infty.
\end{align*}
Hence, by the Borel-Cantelli lemma,
\begin{align*}
  \bbP\Big(\|\QQ_3(d)\|_{\op} >
  \|X_\mu\|_{\psi_1}\sqrt{\frac{\eta \log d}{d-1}}
  \textnormal{ i.o.}\Big)  = 0,
\end{align*}
so that the relation \eqref{e:||Q_4||=o_P(1)} holds. The
proof is complete.
\end{proof}

Finally we conclude with \cref{l:lambda-2(Q3)>=1+1/n} as follows.
\begin{lemma}\label{l:lambda-2(Q3)>=1+1/n}
  Let $\QQ_4 \equiv \QQ_4(d)$, $d \in \bbN$, be the sequence of
  symmetric matrices defined in \eqref{e:Q1,Q2,Q3}. Then, for any
  $d \in \bbN \backslash\{1\}$,
\begin{equation}\label{e:lambda-1(Q3)>=1/n}
  \lambda_1(\QQ_4) = 0,
\end{equation}
and
\begin{equation}\label{e:lambda-ell(Q3)=1+1/n,ell=2,...,n}
  \lambda_\ell(\QQ_4) = \GD{\frac{\mu \hspace{0.5mm} d}{d-1} }, \quad
  \text{ for } \ell = 2,\hdots,d.
\end{equation}
\end{lemma}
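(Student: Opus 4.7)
The plan is to diagonalize $\QQ_4$ explicitly by recognizing its simple rank-one structure. Writing $\one \in \bbR^d$ for the all-ones vector and $\II$ for the identity, one immediately sees from \eqref{e:Q_basic_decomp} and \eqref{e:Q1,Q2,Q3} that
\begin{align*}
\QQ_4 = \mu \II + \frac{\mu}{d-1} \II - \frac{\mu}{d-1}\, \one \one^T = \frac{\mu}{d-1}\bigl( d\, \II - \one \one^T \bigr).
\end{align*}
So the entire eigen-problem reduces to analyzing the rank-one matrix $\one \one^T$.

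First I would verify \eqref{e:lambda-1(Q3)>=1/n} by direct computation: since $\one \one^T \one = d\, \one$, one has $(d \II - \one \one^T)\one = \Zero$, hence $\QQ_4 \one = \Zero$, so $0$ is an eigenvalue of $\QQ_4$. Next, for any vector $v \in \bbR^d$ satisfying $\one^T v = 0$ (i.e., $v$ in the $(d-1)$-dimensional hyperplane orthogonal to $\one$), we have $\one \one^T v = \Zero$, so
\begin{align*}
\QQ_4 v = \frac{\mu}{d-1}\bigl( d\, \II - \one \one^T\bigr) v = \frac{\mu d}{d-1}\, v.
\end{align*}
Thus $\mu d/(d-1)$ is an eigenvalue with geometric multiplicity at least $d-1$. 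Since $\QQ_4$ is $d \times d$ and symmetric, this accounts for all eigenvalues (counted with multiplicity): one copy of $0$ and $d-1$ copies of $\mu d/(d-1)$.

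Finally, because $\mu > 0$ by hypothesis, we have $0 < \mu d/(d-1)$, so sorting in ascending order as in our convention \eqref{eq:EV:real:ass} yields $\lambda_1(\QQ_4) = 0$ and $\lambda_\ell(\QQ_4) = \mu d/(d-1)$ for $\ell = 2, \ldots, d$, which are precisely \eqref{e:lambda-1(Q3)>=1/n} and \eqref{e:lambda-ell(Q3)=1+1/n,ell=2,...,n}. There is no genuine obstacle here: the only care needed is to identify the rank-one structure cleanly and to respect the ascending ordering convention so that the zero eigenvalue is indexed by $\ell = 1$.
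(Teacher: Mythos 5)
Your proof is correct. It does, however, take a more direct route than the paper: you explicitly diagonalize $\QQ_4 = \frac{\mu}{d-1}(d\,\II - \one\one^T)$ by exhibiting the full eigenbasis --- $\one$ with eigenvalue $0$ and the $(d-1)$-dimensional hyperplane $\one^\perp$ with eigenvalue $\mu d/(d-1)$ --- and then just sorts. The paper instead writes $\frac{1}{\mu}\QQ_4 = \II + \frac{1}{d-1}(\II - \mathbf{1})$ (with $\mathbf{1}$ the all-ones matrix), notes $\mathrm{rank}(\mathbf{1})=1$ so that $\lambda_1(\mathbf{1})=\cdots=\lambda_{d-1}(\mathbf{1})=0$, and then applies Weyl's inequalities twice to sandwich $\lambda_2(\QQ_4)/\mu$ from below and $\lambda_d(\QQ_4)/\mu$ from above by $d/(d-1)$, getting $0$ as the remaining eigenvalue from $\QQ_4\one=\Zero$. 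Both arguments hinge on the same rank-one observation; yours is shorter, entirely elementary, and yields the eigenvectors explicitly, while the paper's version reuses the Weyl-inequality machinery it has already set up for the perturbative estimates in the proof of the main random-matrix theorem, keeping the lemma stylistically uniform with its surroundings. Your closing step --- that the $d-1$ copies of $\mu d/(d-1)$ plus the single zero exhaust the spectrum of the symmetric $d\times d$ matrix, and that $\mu>0$ forces the zero to be indexed first under the ascending convention --- is exactly the bookkeeping needed, so there is no gap.
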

\begin{proof}
  The statement \eqref{e:lambda-1(Q3)>=1/n} is a consequence of the
  fact that
  $\QQ_4 \hspace{0.5mm}(1,\hdots,1)^\top = {\mathbf 0} \in \bbR^n$. To
  establish \eqref{e:lambda-ell(Q3)=1+1/n,ell=2,...,n}, it suffices to
  prove that
\begin{equation}\label{e:lambda-2(Q3)>=1+1/n}
  \frac{d}{d-1} \leq \GD{\frac{\lambda_2(\QQ_4)}{\mu}
    \leq \frac{\lambda_d(\QQ_4)}{\mu}} \leq \frac{d}{d-1},
  \quad d \in \bbN \backslash\{1\}.
\end{equation}
Let ${\mathbf 1} \in {\mathcal S}(d,\bbR)$ be a matrix of
ones and recast
\begin{equation}\label{e:Q3/mu}
\frac{1}{\mu}\QQ_4 = \II + \frac{1}{d-1}(\II-{\mathbf 1}). 
\end{equation}
By \eqref{e:Q3/mu}, \eqref{e:Weyl_lower} with $i = 2, j=1$
and \eqref{e:eigenvals_of_-A},
\begin{align}
  \frac{\lambda_2(\QQ_4)}{\mu}
  &\geq \lambda_{1}(\II) + \lambda_{2}\Big(\frac{1}{d-1}(\II-{\mathbf 1})\Big)
    \notag\\
  &= \lambda_{1}(\II) - \frac{1}{d-1}\lambda_{d-1}({\mathbf 1}-\II).
    \label{e:lambda-2(Q3)>=_Weyl_bound}
\end{align}
However, by \eqref{e:Weyl_upper}, in the case $i = d-1, j = 1$,
\begin{equation}\label{e:lambda_n-1(1-I)=-1}
  \lambda_{d-1}({\mathbf 1}-\II)
  \leq \lambda_{d-1}({\mathbf 1}) + \lambda_d(-\II) = -1,
\end{equation}
where we used that
\begin{equation}\label{e:rank(1)=1}
  \lambda_{1}(\mathbf{1}) = \cdots = \lambda_{d-1}({\mathbf 1}) = 0.
\end{equation}
This latter claim follows immediately from the fact that
$\textnormal{rank}({\mathbf 1}) = 1$ and
$\mathbf{1} \hspace{0.5mm}(1,\hdots,1)^\top= d(1,\hdots,1)^\top$.  The
first inequality in \eqref{e:lambda-2(Q3)>=1+1/n} is now a consequence
of \eqref{e:lambda-2(Q3)>=_Weyl_bound} and
\eqref{e:lambda_n-1(1-I)=-1}.

On the other hand, again by \eqref{e:Q3/mu}, \eqref{e:Weyl_upper}, this
time with $i = d, j = 0$, and \eqref{e:eigenvals_of_-A},
\begin{align}
  \GD{\frac{\lambda_d(\QQ_4)}{\mu}}
  &\leq \lambda_d(\II)
         + \lambda_d\Big(\frac{1}{d-1}(\II - {\mathbf 1})\Big)
    \notag\\
  &= \lambda_d(\II) - \lambda_1\Big(\frac{1}{d-1}({\mathbf 1}-\II)\Big).
        \label{e:lambda_n(Q3)=<_upper}
\end{align}
However, by \eqref{e:Weyl_lower} and \eqref{e:rank(1)=1},
\begin{equation}\label{e:lambda_1(1-I)>=_lower}
\lambda_1({\mathbf 1}-\II) \geq \lambda_1({\mathbf 1}) + \lambda_1(-\II)= -1.
\end{equation}
The third inequality in \eqref{e:lambda-2(Q3)>=1+1/n} is now a
consequence of \eqref{e:lambda_n(Q3)=<_upper} and
\eqref{e:lambda_1(1-I)>=_lower}. This establishes
\eqref{e:lambda-2(Q3)>=1+1/n} and, hence,
\eqref{e:lambda-ell(Q3)=1+1/n,ell=2,...,n}, completing the proof.
\end{proof}

\section{Surrogate-trajectory Hamiltonian Monte Carlo}
\label{sec:Sur:trad:HMC}

\newcommand{\G}{\mathbf{G}}

Hamiltonian Monte Carlo (HMC) \cite{duane1987hybrid,neal2011mcmc} is
an advanced MCMC procedure that uses numerical approximations to
Hamiltonian trajectories to generate Metropolis-type
\cite{metropolis1953equation} proposals far away from the current
Markov chain state.  On the one hand, this approach to proposal
generation helps reduce autocorrelation between chain states and is
particularly helpful within higher-dimensional state spaces.  On the
other hand, numerical integration of Hamilton's equations requires
repeated evaluation of the Hamiltonian potential energy's gradient,
and these repeated floating-point operations may become
computationally burdensome.

Lets briefly recall this HMC approach to resolve a given density
$\pi(\cdot)$ of a `target' probability measure of
interest.\footnote{In the Bayesian inference context, we are typically
  considering the posterior density function
  $\pi(\ttheta):=p(\ttheta|\Y)$ for the parameter
  $\ttheta\in\mathbb{R}^K$ given observed data $\Y$.}  We proceed by
considering a potential energy of the form
$U(\ttheta):=-\log \pi(\ttheta)$.  We then select an associated
kinetic energy $V(\mom) := | \G^{-1/2} \mom|^2$ for an appropriately
chosen symmetric-positive-definite mass matrix $\G$ (which is often
taken as the identity for simplicity).  One then observes that the
Gibbs measure, proportional to $e^{-H(\ttheta, \mom)}$ where
$H = U +V$, is invariant under the associated Hamiltonian dynamic.
Here note that the $\ttheta$ marginal of $e^{-H(\ttheta, \mom)}$
coincides with $\pi$ while the $\mom$ marginal is normally distributed
as $\mathcal{N}(\Zero,\G)$.

One operationalizes these observations as an algorithmic sampling
procedure as follows. At each step, given a current sample
$\ttheta^{(n)}$, one draws $\mom^{(n)} \sim \mathcal{N}(\Zero,\G)$.
From this initial state
$(\ttheta(0), \mom(0)) := (\ttheta^{(n)},\mom^{(n)})$ one then
numerically approximates the associated Hamiltonian dynamics using a
St\"{o}rmer-Verlet (velocity Verlet) or leapfrog integrator up to a
total integration time $\tau >0$ and using integration step size
$\epsilon > 0$.  In this context, note that a single iteration of this
integrator takes the form, \cite{leimkuhler2004simulating},
\begin{align}
  \mom\left(s + \frac{\epsilon}{2}\right)
  &:= \mom(s) + \frac{\epsilon}{2} \nabla \log \pi(\ttheta(s)),  
	 \nonumber \\
  \ttheta(s+\epsilon) &:= \ttheta(s) + \epsilon \, \G^{-1}\mom(s+\frac{\epsilon}{2}), 
		 \label{eq:leap}\\ 
  \mom(s + \epsilon)
  &:= \mom\left(s+\frac{\epsilon}{2}\right)
    + \frac{\epsilon}{2} \nabla \log \pi(\ttheta(s+\epsilon))  \, .
	\nonumber
\end{align}
In this fashion the proposed new state is given through
\eqref{eq:leap} by $\ttheta(\tau)$.  To remove bias this procedure can
be augmented with an acceptance probability of the form
$\alpha^{(n)} := \exp( H( \ttheta(0), \mom(0)) - H(\ttheta(\tau),
\mom(\tau)) \wedge 1$.

Different strategies aim to speed up the leapfrog integrator's many
log-posterior gradient evaluations $\nabla \log \pi (\ttheta)$, as
these numerical routines often represent the algorithm's computational
bottleneck.  In model-specific contexts,
\cite{holbrook2021massive,holbrook2022bayesian,holbrook2022viral}
yield parallelization strategies, and \cite{ji2020gradients} develops
dynamic programming techniques, to accelerate the evaluation of
$\nabla \log \pi (\ttheta)$.

A small body of work considers another approach by replacing
$\nabla \log \pi (\ttheta)$ with a suitable approximation
$\widetilde{\nabla} \log \pi (\ttheta)$ and recognizing that the
modified \eqref{eq:leap} continues to satisfy path reversibility and
volume preservation, two essential ingredients for well-specified
HMC. It follows immediately that the resulting `surrogate trajectory
HMC' continues to sample the correct target distribution $\pi(\cdot)$;
see further details \cite{neal2011mcmc, glatt2020accept}.
Nonetheless, the acceptance rates and overall efficiency of such
samplers may suffer when approximations are poor.

The majority of surrogate HMC methods first obtain a small sample of
exact gradient evaluations and then use some model to interpolate:
\cite{zhang2017precomputing} assumes the approximate gradient follows
a piecewise constant form across a grid;
\cite{rasmussen2003gaussian,lan2016emulation} construct approximations
using Gaussian processes; and \cite{zhang2017hamiltonian,li2019neural}
do the same using neural networks.  But an even simpler approach to
surrogate HMC may be appropriate when gradients have series
representations as we can leverage here in \eqref{eq:grad:series}.

\section{Visualizing the posterior mean random effects}\label{sec:randomEffs}

Figure \ref{fig:randEffs} displays posterior means of the exponentiated random effects, which one may interpret as multiplicative deviations from the fixed effects' contributions to the generator matrix.  Whereas the vast majority of exponentiated random effects have posterior means close to 1, indicating no deviation from the fixed-effect model, a few exhibit posterior means that are significantly greater than 1.  In particular, the rate element corresponding to transfer from the US to Hubei, CN, exhibits a 1.81-fold random-effect derived increase to the fixed-effect component.  This large multiplicative increase agrees with, but goes beyond, the influence that the Hubei asymmetry holds for the entire generator matrix model. 

\setcounter{figure}{0}
\renewcommand{\thefigure}{S\arabic{figure}}
\begin{figure}[t]
	\centering
	\includegraphics[width=0.8\linewidth]{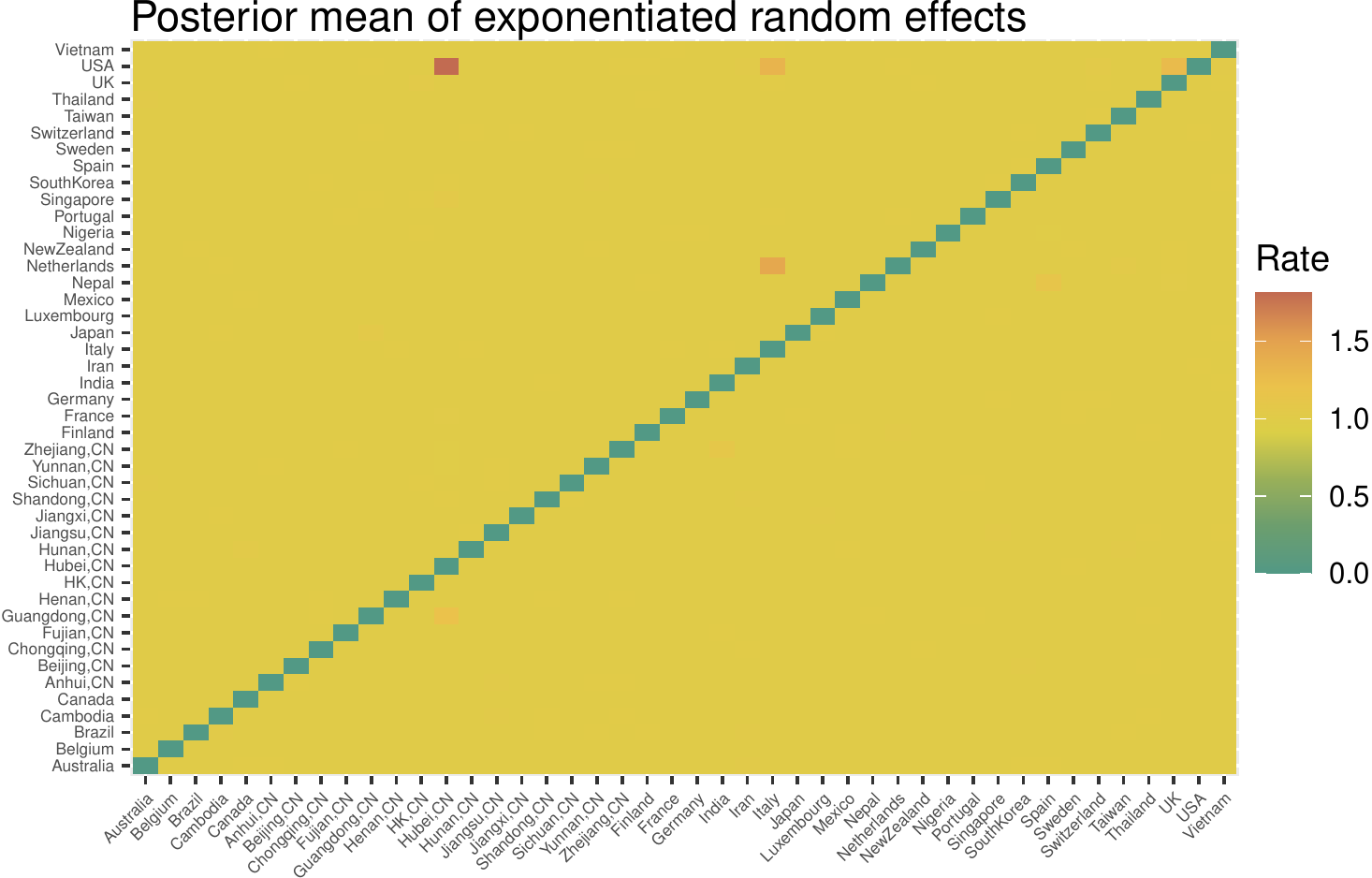}
	\caption{Posterior means for exponentiated random effects convey expected multiplicative deviations from the portion of the rate attributable to fixed effects for each corresponding element of the generator matrix.  Notably, we infer a roughly 1.81-fold posterior mean increase in the rate of transitions from the US to Hubei, CN, beyond that portion of the rate which may be explained by fixed effects.  Less pronounced are posterior mean multiplicative increases of 1.34 (US to Italy), 1.27 (US to UK), 1.44 (Netherlands to Italy) and 1.21 (Guangdong, CN, to Hubei, CN).}\label{fig:randEffs}
\end{figure}


\end{document}